\DeclareMathOperator{\argmax}{argmax}
\newtheorem{definition}{Definition}
\newtheorem{theorem}{Theorem}
\newtheorem{lemma}{Lemma}
\newtheorem{problem}{Problem}
\newcommand{\IMP}{\textsf{IMP}}
\newcommand{\GREEDY}{\textsf{GREEDY}}
\newcommand{\SG}{\textsf{SG}}
\newcommand{\RCELF}{\textsf{RCELF}}
\newcommand{\CELF}{\textsf{CELF}}
\newcommand{\CELFplus}{\textsf{CELF++}}
\newcommand{\PMC}{\textsf{PMC}}
\newcommand{\RIS}{\textsf{RIS}}
\newcommand{\IMM}{\textsf{IMM}}
\newcommand{\TIM}{\textsf{TIM}}
\newcommand{\DSSA}{\textsf{DSSA}}
\newcommand{\IRIE}{\textsf{IRIE}}
\newcommand{\IMRank}{\textsf{IMRank}}
\newcommand{\IC}{\textsf{IC}}
\newcommand{\LT}{\textsf{LT}}
\newcommand{\WC}{\textsf{WC}}
\newcommand{\RR}{\textsf{RR}}
\newcommand{\dblp}{\textsf{DBLP}}
\newcommand{\livejournal}{\textsf{LiveJournal}}
\newcommand{\nethept}{\textsf{NetHEPT}}
\newcommand{\twitter}{\textsf{Twitter}}
\newcommand{\twitterrv}{\textsf{TwitterLarge}}
\newcommand{\G}{$\mathsf{G}$}
\newcommand{\X}{$\mathsf{X}$}
\newcommand{\rG}{$\mathsf{rG}$}
\newcommand{\M}{$\mathsf{M}$}
\newcommand{\Set}{$\mathsf{S}$}
\newcommand{\inu}[1]{$\mathsf{In}(#1)$}
\newcommand{\outu}[1]{$\mathsf{Out}(#1)$}
\newcommand{\I}[1]{$\mathcal{I}($#1$)$}
\newcommand{\stitle}[1]{\vspace*{0.5em}\noindent{\bf #1:\/}}
\newcommand{\trim}{\vspace*{-2mm}}
\begin{document}

%\title{Kill Three Birds by One Stone: \\A Residual-based Approach for Influence Maximization Problem}
\title{\RCELF: A Residual-based Approach for Influence Maximization Problem}

\author{Xinxun~Zeng,~~~Shiqi~Zhang,~~~and~~~Bo~Tang
\IEEEcompsocitemizethanks{\IEEEcompsocthanksitem The authors are with the Department
of Computer Science and Engineering, Southern University of Science and Technology, Shenzhen, China. Bo Tang is the corresponding author. \protect\\
Email: \{zengxx@mail,11510580@mail,tangb3@\}sustech.edu.cn}
}

\IEEEtitleabstractindextext{%
\begin{abstract}
Influence Maximization Problem (\IMP{}) is selecting a seed set of nodes in the social network to spread the influence as widely as possible.
It has many applications in multiple domains, e.g., viral marketing is frequently used for new products or activities advertisement.
While it is a classic and well-studied problem in computer science, unfortunately, all those proposed techniques are compromising among time efficiency, memory consumption, and result quality.
In this paper, we conduct comprehensive experimental studies on the state-of-the-art \IMP{} approximate approaches to reveal the underlying trade-off strategies.
Interestingly, we find that even the state-of-the-art approaches are impractical when the propagation probability of the network have been taken into consideration.
With the findings of existing approaches, we propose a novel residual-based approach (i.e., \RCELF{}) for \IMP{},
which i) overcomes the deficiencies of existing approximate approaches, and ii) provides theoretical guaranteed results with high efficiency in both time- and space- perspectives.
%, and ii) performs excellent with a generalized probability assignment method.
We demonstrate the superiority of our proposal by extensive experimental evaluation on real datasets.
\end{abstract}

% Note that keywords are not normally used for peerreview papers.
% \begin{IEEEkeywords}
% Computer Society, IEEE, IEEEtran, journal, \LaTeX, paper, template.
% \end{IEEEkeywords}
}

\maketitle

\IEEEdisplaynontitleabstractindextext

\IEEEraisesectionheading{\section{Introduction}} \label{sec:intro}
\IEEEPARstart{S}{ocial} networks (e.g., Facebook, Twitter, Weibo) are becoming an essential media for the public recently.
Viral marketing is widely used in social networks to promote new products or activities.
For example, new products are advertised by some influential users in social networks to other users by ``word-of-mouth" effect.
Therefore, the problem of selecting small but effective influential user set, a.k.a., \emph{Influence Maximization Problem} (\IMP), is the key for successfully viral marketing, and it has been widely studied in literature\cite{kempe2003maximizing,leskovec2007cost,goyal2011celf++,cheng2013staticgreedy,han2018efficient,tang2018online,ohsaka2014fast,borgs2014maximizing,galhotra2016holistic,tang2014influence,tang2015influence,zhou2015upper,nguyen2016stop,huang2017revisiting,kimura2006tractable,chen2009efficient,chen2010scalableA,chen2010scalable,goyal2011simpath,jung2012irie,kim2013scalable,liu2014influence,cohen2014sketch,tang2017influence,nguyen2017importance}.
Mathematically, given a social network \G{}, a positive integer $k$ and a diffusion model \M{},
the \emph{Influence Maximization Problem} (\IMP{}) returns a size-$k$ nodes subset $\mathsf{S}$ (in \G{}) which has the maximum expected influence in \G{}.
The diffusion model \M{} defines the exact ``word-of-mouth" effect manner,
e.g., each influential/activated user can activate its inactive neighbours with a probability in \emph{Independent Cascade (\IC{})} diffusion model.

It is NP-hard to find the optimal size-$k$ set for \IMP{} with \emph{Independent Cascade (\IC{})} and \emph{Linear Threshold (\LT{})} diffusion models~\cite{kempe2003maximizing}.
Due to the hardness of the \IMP{} problem, many approximation approaches ~\cite{leskovec2007cost,goyal2011celf++,cheng2013staticgreedy,ohsaka2014fast,borgs2014maximizing,tang2014influence,tang2015influence,zhou2015upper,nguyen2016stop,huang2017revisiting} and heuristic solutions~\cite{kimura2006tractable,galhotra2016holistic,chen2009efficient,chen2010scalableA,chen2010scalable,goyal2011simpath,jung2012irie,kim2013scalable,liu2014influence,cheng2014imrank,tang2017influence} have been proposed and extensive studied.
However, all existing approaches (cf. Figure~\ref{fig:imp}) are trading-off among time efficiency, memory consumption, and result quality~\cite{arora2017debunking}.
Since the empirical performance of the state-of-the-art approximation approaches (e.g., \IMM{}~\cite{tang2015influence}, \DSSA{}~\cite{nguyen2016stop}) are comparable to, even outperform the state-of-the-art heuristic solutions (e.g., \IRIE~\cite{jung2012irie}, \IMRank~\cite{cheng2014imrank}), we focus on the approximated solutions for \IMP{} in this work.

\begin{figure}
\small
\begin{center}
\includegraphics[width=0.8\columnwidth]{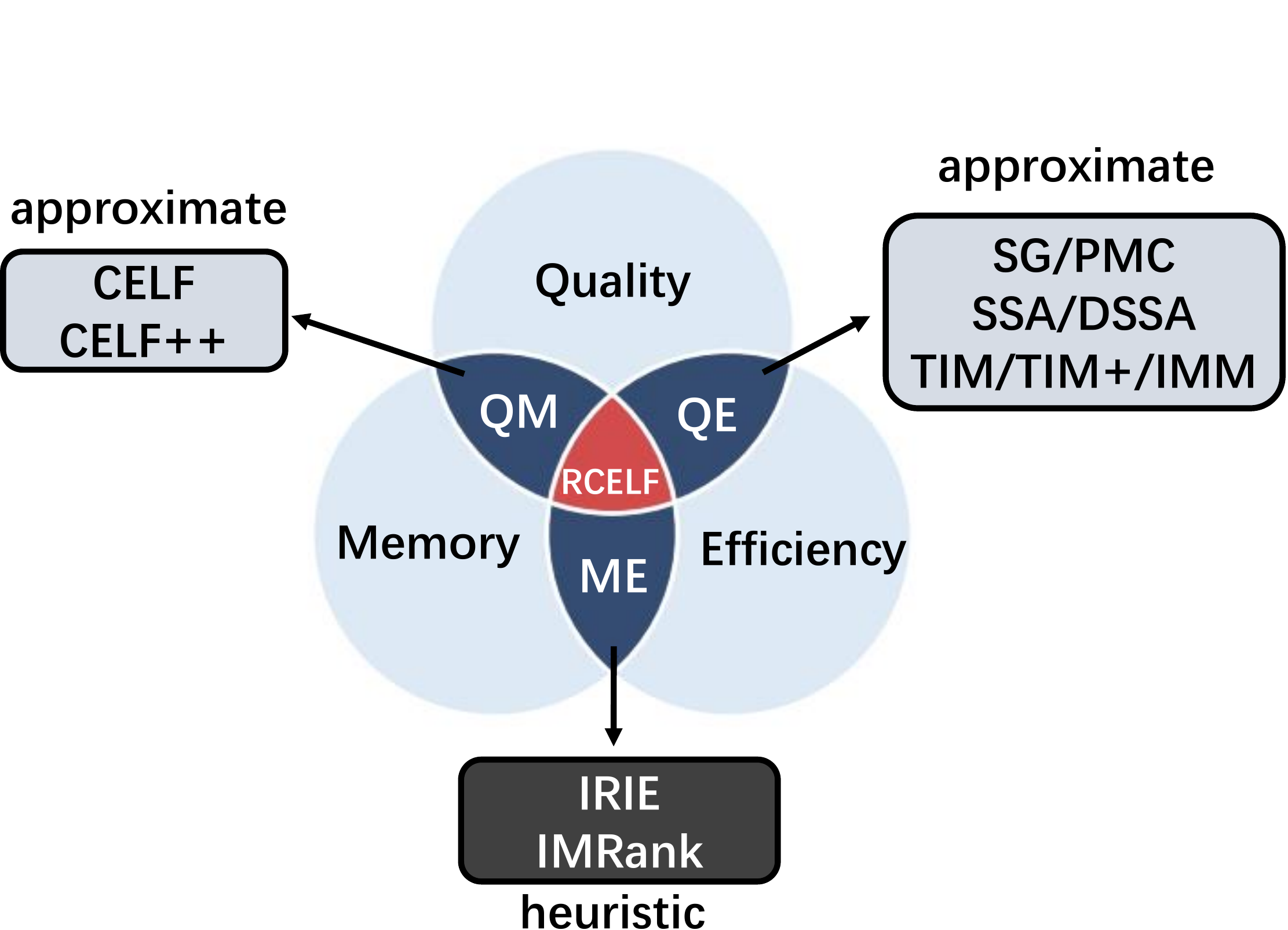}
\caption{\IMP{} solutions (adopted from~\cite{arora2017debunking})} \label{fig:imp}
\end{center}
\vspace{-3mm}
\end{figure}

We classify existing approximation approaches into three categories, \emph{Monte-Carlo Simulation}, \emph{Snapshots} and \emph{Reverse Influence Sampling}, respectively.
We summarize the representative approaches of each category in Figure~\ref{fig:category}.

\stitle{Monte-Carlo Simulation-based approaches}\cite{kempe2003maximizing} is the first to apply Monte-Carlo Simulation techniques to solve \IMP{}.
It achieves $(1-1/e-\epsilon)$-approximation ratio with probability $1-1/n$ if the number of Monte-Carlo simulation times is $\Theta(\epsilon^{-2}k^2n\log(n^2k))$.
\CELF{}~\cite{leskovec2007cost} improves the performance of~\cite{kempe2003maximizing} by exploiting the \emph{submodularity} property of \IMP{} with the same result approximation guarantee.
However, \CELF{} and its variant (i.e., \CELFplus{}~\cite{goyal2011celf++}) are not practical for the large social networks as the cost of Monte-Carlo simulations is rather expensive.
In summary, Monte-Carlo simulation approaches achieve approximation guaranteed results for \IMP{} by incurring extremely expensive time cost.

\stitle{Snapshots-based approaches} They are proposed to improve the time efficiency of the Monte-Carlo Simulation-based approaches.
\SG{}~\cite{cheng2013staticgreedy} is the first to employ snapshots to address \IMP{} problem.
It samples subgraphs \G$_{i}$ (a.k.a., snapshots) of social network \G{} in advance by retaining each edge with a probability of its weight.
The influence of a node is estimated by averaging its influence on all snapshots.
\PMC{}~\cite{ohsaka2014fast} shrinks snapshots in \SG{} into vertex-weighted directed acyclic graphs (DAGs) by using strongly connected components (SCCs) of snapshots as DAGs' nodes.
Through this, it reduces the memory consumption of \SG{}.
Both \SG{} and \PMC{} guarantee $(1-1/e-\epsilon)$-approximation ratio by sampling enough snapshots.
However, the memory overheads of \SG{} and \PMC{} are infeasible when the size of social network \G{} is large.
In conclusion, \emph{snapshots}-based approaches provide approximation guaranteed results for \IMP{} by incurring dramatically high memory consumption (to store snapshots).

\begin{figure}
\small
\begin{center}
    \includegraphics[width=0.8\columnwidth]{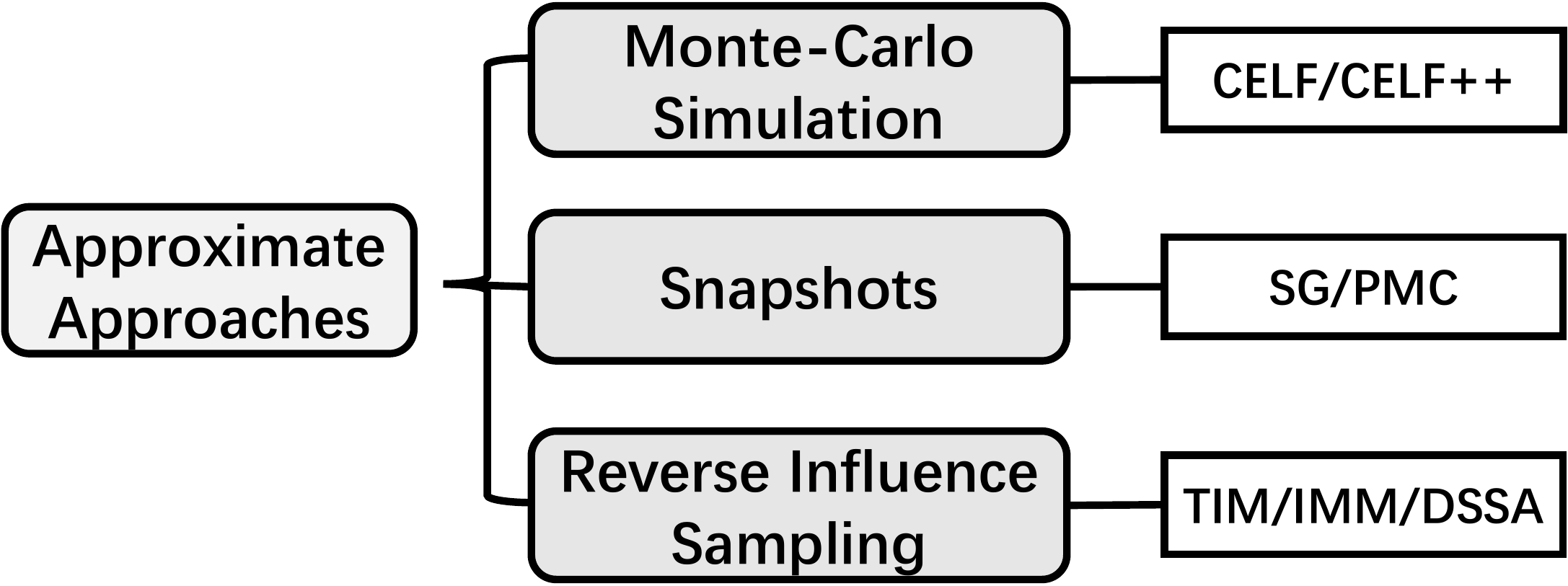}
    \caption{Approximate approach categories for \IMP{}} \label{fig:category}
\end{center}
\end{figure}

\stitle{Reverse Influence Sampling-based approaches}
\cite{borgs2014maximizing} is the first to propose {Reverse Influence Sampling}-based approaches for \IMP{}.
The core idea of reverse influence sampling is: (i) construct reverse reachable sets for the nodes,
(ii) employ a greedy max-coverage algorithm to select the seed node iteratively.
The representative approaches are \TIM{}~\cite{tang2014influence}, \IMM{}~\cite{tang2015influence} and \DSSA{}~\cite{nguyen2016stop}.
All of them guarantee $(1-1/e-\epsilon)$-approximation ratio with a certain number of generated reverse reachable sets.
In general, the memory consumptions of reverse influence sampling-based approaches are much smaller than snapshots-based approaches.
However, it also may be unaffordable as
(i) reverse reachable set size is sensitive to the propagation probability of each edge in the social network,
e.g., \IMM{} incurs 12.3GB memory footprints for 45.6MB \dblp{} dataset when the weighted cascade setting is slightly revised, we will elaborate it in Section~\ref{sec:prob};
(ii) it needs to generate a large number of reverse reachable sets to guarantee the theoretical bound when $\epsilon$ is small~\cite{li2018influence}.
To make matters worse, all reverse reachable sets are stored in the memory for the max-coverage algorithm.

\stitle{Probability assignment in diffusion models}
% Diffusion models in \IMP{} define how the node can be influenced in social graph \G{} exactly. 
{In general, diffusion models define how the node can switch its status from \emph{inactive} to \emph{active} on a weighted graph, where the weight of each edge is the influence probability.}
For example, an active node $u$ has single chance to influence its inactive neighbor $v$ with probability $w(u,v)$ in \IC{} model.
In literature, a commonly-used influence probability assignment method in both \IC{} (i.e., \WC{}) and \LT{} models is setting $w(u,v) = 1/|\mathsf{In}(v)|$~\cite{li2018influence}, where $\mathsf{In}(v)$ is the in-degree of $v$.
This assignment method assumes a user probably can be activated if all her incoming neighbors are active in both \WC{} and \LT{} models.
However, it may not be practical in some real-world applications.
For example, many users in Twitter (e.g., the users who use Twitter less than once per day) probably are not be influenced even all the neighbors are active.
Interestingly, there is also another kind of social network where the users can be influenced even only one or few of its neighbors are active,
e.g., users in Pinduoduo\footnote{\url{https://www.pinduoduo.com/}} (an online shopping website in China) can be easily influenced as they can form a shopping team to get a lower price for their purchase.

%In order to address the limitations of the general probability assignment method,
%we employ a generalized probability assignment method, i.e., $\rho/|\mathsf{In}(v)|$, in both \WC{} and \LT{} models.
%$\rho$ is a tunable parameter, it reflects the influence degree of each user in the social network. It can be either $> 1$ or $< 1$.
To overcome the above limitations of common-used probability assignment method, i.e., $1/|$\inu{v}$|$,
We propose a generalized probability assignment method in this work.
Specifically, the probability that node $u$ can activate node $v$ at edge $(u,v)$ is $w(u,v)=\rho/|$\inu{v}$|$, where $\rho$ reflects the activeness of the users.
It is worth to note that the generalized \IC{} and \LT{} models are the conventional \IC{} and \LT{} models when $\rho=1$.

\stitle{Our approach}
In this paper, we propose a novel residual-based algorithm \RCELF{} to overcome the deficiencies of existing approximation approaches.
The core of \RCELF{} is the novel marginal gain computation method based on probability theory.
Specifically, we define the residual capacity of a node as the maximal contribution of that node can make to influence spread value of the seed set.
Initially, the residual capacity of each node is $1$.
During each seed node selection process, the residual capacity of each node will diminish by either being selected as a seed node or being influenced by other selected seed node.
\RCELF{} approach achieves excellent time efficiency as
(i) it enjoys the benefits of the submodularity of the residual-based influence function and cost-effective lazy forward node selection manner,
however, it requires much fewer Monte-Carlo simulations;
(ii) the number of nodes under consideration (i.e., their residual capacities are large than 0) falls quickly during seed set selection process.
and
(iii) two optimizations are devised to speedup \RCELF{}.
Meanwhile, \RCELF{} guarantees $(1-1/e)$-approximation of the result, as elaborated in Section~\ref{sec:ours}.
{From memory consumption perspective,} \RCELF{} only stores the raw social network data. It does not have any extra memory consumption when comparing with snapshot-based approaches and reverse influence sampling-based approaches. Thus, the space complexity of \RCELF{} is optimal, i.e., $O(n+m)$.
{Moreover, \RCELF{} is robust to a generalized probability assignment method, i.e., $\rho/|\mathsf{In}(v)|$, in both \WC{} and \LT{} models. $\rho$ is a tunable parameter and reflects the influence degree of each user in the social network.}
In summary, \RCELF{} achieves excellent time efficiency, low memory consumption and approximation guaranteed result quality for \IMP{} in widely used diffusion models  (i.e., as shown in the center of Figure~\ref{fig:imp}) %\textcolor{blue}{without the limitation of diffusion model}.

%Fortunately, \RCELF{} is robust to a generalized probability assignment method, i.e., $\rho/|\mathsf{In}(v)|$, in both \WC{} and \LT{} models.
%$\rho$ is a tunable parameter, it reflects the influence degree of each user in the social network, it can be either $<1$ or $>1$.
%The generalized \WC{} and \LT{} models are the conventional \WC{} and \LT{} models when $\rho=1$.

\begin{table}
    \centering
    \small
    \caption{Approximate approaches comparison}\label{tab:approximate}
    \resizebox{0.95\columnwidth}{!}{
        \begin{tabular}{|l|c|c|c|}\hline
       Category & Time  & Memory  & Result  \\
                & Efficiency & Overhead & Quality \\\hline \hline
      {Monte-Carlo Simulation} & Low & Small & High \\ \hline
      {Snapshots} & Median & Large & High \\ \hline
      {Reverse Influence Sampling} & High & Median & High \\ \hline
      {\bf Our Approach (\RCELF{})} & {\bf High} & {\bf Small} & {\bf High} \\ \hline
    \end{tabular}
    }
\end{table}

{
We summarize the comparison among our proposal \RCELF{} and existing approximate approaches for \IMP{} in Table~\ref{tab:approximate}.}
Specifically, the contributions of this paper are summarized as follows.

\begin{itemize}[leftmargin=*]
    \item We conduct comprehensive experiments to reveal the trade-off strategies among the state-of-the-art approximate approaches for \IMP{} (Section~\ref{sec:prob}).
    \item We propose a residual-based algorithm \RCELF{} for \IMP{} to achieve excellent time efficiency, low memory overhead, and approximation guaranteed results concurrently (Section~\ref{sec:ours}).
    \item We evaluate the effectiveness and efficiency of our proposal by extensive experiments on real-world benchmark datasets (Section~\ref{sec:exp}).
\end{itemize}

The remainder of this paper is organized as follows.
Section~\ref{sec:prob} describes the preliminaries and related works of \IMP{} and conducts comprehensive experiments on the state-of-the-art approximate approaches to reveal their underlying issues.
Section~\ref{sec:ours} presents our residual-based approach \RCELF{} for \IMP{}.
Section~\ref{sec:exp} verifies the superiority of our proposal by extensive experiments, followed by the conclusion in Section~\ref{sec:con}.

\section{Influence Maximization Problem}\label{sec:prob}

In this section, we first define the influence maximization problem (\IMP{}) formally.
Then, we conduct extensive preliminary experiments on the representative approximate approaches,
and present the findings of existing approaches.

\subsection{Problem Definition}\label{sec:probdef}

We introduce several fundamental concepts for influence maximization problem (\IMP{}) first.

\begin{definition}(Social Network) \label{def:sn}
A social network is a graph \G$(V,E,W)$, where $V$ $(|V|=n)$ is the set of nodes,
and $E$ is the set of directed edges, $E \subseteq V \times V, (|E|=m)$,
and $W$ is the set of weights of each edge in $E$.
\end{definition}

The weight of edge $(u,v)$ is $w(u,v)$, and $u$ is the incoming neighbor of $v$, vice versa, $v$ is the outgoing neighbor of $u$.
\inu{v} and \outu{v} are the incoming and outgoing neighbor sets of node $v$, respectively.
Given a social network \G{}, the \IMP{} is selecting a small but effective influential user set which could spread the influence in social network \G{} as widely as possible.
We formally define  seed node (i.e., influential user) in Definition~\ref{def:seedset}.

\begin{definition}(Seed Node)\label{def:seedset}
Node $v \in V$ is a seed node if it acts as the source of information diffusion in the social network \G$(V,E,W)$.
The set of seed nodes is called seed set, denoted by \Set.
\end{definition}

Given a social network \G$(V,E,W)$ and seed set \Set{}, the influence of seed set \Set{} in \G{} is the total number of activated nodes with a specified diffusion model \M{}, denoted by \I{\Set}.
\I{\Set} includes both newly activated node during information diffusion process and the initial seed set \Set{}.
The information diffusion process (\M{}) is a stochastic process, the goal of \IMP{} is to maximize the expected influence value, as stated in Problem~\ref{prob:imp}.

\begin{problem}(Influence Maximization Problem, \IMP{})\label{prob:imp}
Given a social network \G$(V,E,W)$, an integer $k$, and diffusion model \M{},
the influence maximization problem \IMP{} is selecting a size-$k$ seed set \Set{} $ \subseteq V$,
such that the expected influence value $\sigma($\Set$)=$ $\mathbb{E}($\I{\Set}$)$ is maximized.
\end{problem}

The information diffusion model \M{} defines the exact information spread manner of seed set \Set{}.
For example, each active user $u$ in step $t$ will active each of its inactive outgoing neighbor $v$ in step $t-1$ with an influence probability $p_{u,v}$
in \emph{Independent Cascade (\IC{})} and \emph{Weighted Cascade (\WC{})} model.
In \emph{Linear Threshold (\LT{})} model,  each edge $(u,v)$ has a weight $p_{u,v}$ and each node $v$ has a threshold $\theta_{v}$.
The node $v$ can be activated if a ``sufficient'' number of its incoming neighbors are active, i.e., $\sum_{v\text{'s active neighbors}~u} p_{u,v} \geq \theta_v$.

\stitle{Influence Probabilities of Edges}
{One of the core components in diffusion model \M{} is determining the influence probabilities/weights of each edge in social network.
The commonly-used influence probability assignment method is weighted cascade (\WC{})~\cite{kempe2003maximizing,cheng2013staticgreedy,tang2014influence,tang2015influence,jung2012irie, chen2010scalableA,arora2017debunking,liu2014influence,galhotra2016holistic,chen2009efficient,li2018influence}.
In particular, all incoming neighbors of $v$ influence $v$ with equal probability $1/|$\inu{v}$|$.}
However, this assignment method ignores the activeness of the users in practical social networks.
e.g., there exists 73\% of Twitter users who use Twitter less than once per day\footnote{\url{http://bit.do/eSyzQ}}.
Such kind of users probably cannot be influenced even all her incoming neighbors are activated.
Interestingly, users can be easily influenced by one or few of her neighbors in other applications.
For example, users in Pinduoduo\footnote{\url{http://bit.do/eSyzg}}, an online shopping website, can invite their friends to form a shopping team to get a lower price for their purchase.
{Thus, an inactive user can be easily influenced by one of her activated friend.}

{In order to overcome the above limitations of common-used probability assignment method, i.e., $1/|$\inu{v}$|$.}
We propose a generalized probability assignment method in this work.
{Specifically, the probability that node $u$ can activate node $v$ at edge $(u,v)$ is $p_{u,v}=\rho/|$\inu{v}$|$, where $\rho$ reflects the activeness of the users.}
The advantage of the generalized probability assignment method is two-fold:
(i) $\rho$ is tunable. It is the conventional setting when $\rho=1$, and it is more general as $\rho$ can be set by users or learnt from training data,
and  (ii) it still enjoys the properties of conventional information diffusion models (e.g., \IC{}, \WC{}, and \LT{}).

\stitle{Properties of \IMP{}}
In order to facilitate the subsequent discussion, we briefly summarize the properties of \IMP{} in this section.

\begin{theorem}[Hardness of \IMP{}]\label{the:hardness}
The problem of influence maximization, as defined in Problem~\ref{prob:imp}, is NP-hard under \IC{} and \LT{} model.
\end{theorem}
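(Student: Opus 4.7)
The plan is to exhibit polynomial-time reductions to \IMP{} from two well-known NP-hard problems, treating the \IC{} and \LT{} diffusion models separately since the gadgets and analyses differ. In each case the reduction will be value-preserving, in the sense that an optimal size-$k$ seed set achieves a specific threshold on $\sigma(\mathsf{S})=\mathbb{E}(\mathcal{I}(\mathsf{S}))$ if and only if the source instance is a YES instance.

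For the \IC{} model I would reduce from Set Cover. Given a Set Cover instance with universe $U=\{u_1,\ldots,u_n\}$, collection $\{S_1,\ldots,S_m\}$, and budget $k$, I would build a bipartite directed social network on $m+n$ nodes: a node $x_i$ for each set $S_i$, a node $y_j$ for each element $u_j$, and a directed edge $(x_i,y_j)$ with influence probability $p_{x_i,y_j}=1$ whenever $u_j\in S_i$. Because every edge fires deterministically, for any $\mathsf{S}\subseteq\{x_1,\ldots,x_m\}$ the influence equals $|\mathsf{S}|+\bigl|\bigcup_{x_i\in\mathsf{S}}S_i\bigr|$; hence $\sigma(\mathsf{S})\ge k+n$ for some such size-$k$ set if and only if the Set Cover instance admits a cover of size $k$. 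A short exchange argument shows that swapping any $y_j$ in an optimal seed set for some $x_i$ covering it does not decrease $\sigma$, so restricting attention to seeds in $\{x_i\}$ is without loss of generality.

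For the \LT{} model I would reduce from Vertex Cover on an arbitrary graph (alternatively from Set Cover with a uniform-threshold gadget). The analysis here leverages the live-edge equivalence of the threshold process, which states that $\sigma(\mathsf{S})$ equals the expected number of vertices reachable from $\mathsf{S}$ in a random subgraph in which each node picks at most one incoming edge with probability proportional to its weight. The gadget I would construct makes all relevant weights deterministic so that the live-edge draw becomes unique, again reducing the expected influence to a combinatorial coverage count; the equivalence between a size-$k$ vertex cover and a size-$k$ seed set of influence at least a prescribed value then follows directly.

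The main obstacle I expect is not the combinatorial construction itself but the careful verification that $\sigma(\mathsf{S})$ collapses to a deterministic combinatorial quantity on the constructed gadgets. Under \IC{} this is immediate because every edge probability is $1$. Under \LT{} it requires invoking the live-edge-graph reformulation of Kempe--Kleinberg--Tardos and choosing edge weights and thresholds so that ``activated'' coincides exactly with ``covered''; making that correspondence precise is the delicate step. Once these correspondences are in place, polynomiality of the reductions is evident from the $O(m+n)$ sizes of the gadgets, and the NP-hardness of \IMP{} under both diffusion models follows.
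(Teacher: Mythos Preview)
Your proposal is correct and coincides with the argument the paper relies on: the paper does not give its own proof of this theorem but simply omits it and refers the reader to Kempe, Kleinberg, and Tardos~\cite{kempe2003maximizing}, whose proofs are precisely the Set Cover reduction for \IC{} and the Vertex Cover reduction for \LT{} that you sketch. Your exchange argument for restricting seeds to the set-nodes and your appeal to the live-edge formulation on the \LT{} side are the standard steps from that reference, so there is nothing to add.
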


In addition to the above hardness of \IMP{}, we present two nice properties of \IMP{}, \emph{monotonicity} and \emph{submodularity} in Theorem~\ref{the:monotone} and~\ref{the:submodular}, respectively.

\begin{theorem}[Monotonicity]\label{the:monotone}
The resulting influence function $\sigma(\cdot)$ is monotone as for any $\mathsf{S}' \subset \mathsf{S}$, we have $\sigma(\mathsf{S}') \leq \sigma(\mathsf{S})$.
\end{theorem}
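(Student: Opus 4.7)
The plan is to prove monotonicity by moving from the stochastic diffusion description to the well-known equivalent ``possible world'' (live-edge) representation, where the argument becomes a straightforward set-containment statement for reachability that is preserved under expectation.

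First I would make precise the equivalence between the diffusion model \M{} and a distribution over deterministic directed graphs. Under the \IC{}/\WC{} model, I would declare each edge $(u,v)$ independently \emph{live} with probability $p_{u,v}$ and \emph{blocked} otherwise; this yields a random subgraph $g$ of \G{} drawn from a product distribution $\Pr[g]$. A standard coupling argument shows that, conditional on the sampled graph $g$, the set of nodes activated by a seed set $\mathsf{S}$ is exactly $R_g(\mathsf{S})$, the set of nodes reachable from $\mathsf{S}$ via live edges in $g$. For the \LT{} model I would invoke the analogous live-edge construction of Kempe--Kleinberg--Tardos, in which each node $v$ picks at most one incoming edge, with edge $(u,v)$ chosen with probability $p_{u,v}$ and no edge chosen with probability $1 - \sum_{u \in \mathsf{In}(v)} p_{u,v}$; under this sampling, the activated set is again the reachable set $R_g(\mathsf{S})$.

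Next I would exploit the obvious deterministic monotonicity of reachability: for any fixed graph $g$ and any $\mathsf{S}' \subseteq \mathsf{S}$, every node reachable from $\mathsf{S}'$ is also reachable from $\mathsf{S}$, so $R_g(\mathsf{S}') \subseteq R_g(\mathsf{S})$ and hence $|R_g(\mathsf{S}')| \leq |R_g(\mathsf{S})|$. Taking expectations over $g$ and using the equivalence from the previous step gives
\begin{equation*}
\sigma(\mathsf{S}') = \sum_{g} \Pr[g]\, |R_g(\mathsf{S}')| \;\leq\; \sum_{g} \Pr[g]\, |R_g(\mathsf{S})| = \sigma(\mathsf{S}),
\end{equation*}
which is precisely the claimed monotonicity.

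The main obstacle is not the set-containment step, which is immediate, but rather justifying that the live-edge construction faithfully reproduces the diffusion dynamics, especially in the \LT{} case, where the correspondence between threshold sampling and the one-incoming-edge sampling requires a careful coupling. I would therefore spend the bulk of the write-up on establishing that $\mathbb{E}[|\I{\mathsf{S}}|] = \mathbb{E}_g[|R_g(\mathsf{S})|]$ for both diffusion models, treating the monotonicity conclusion as a one-line corollary. This also has the benefit of setting up the same machinery needed for the submodularity claim in Theorem~\ref{the:submodular}.
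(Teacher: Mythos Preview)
Your proposal is correct and mirrors the standard Kempe--Kleinberg--Tardos live-edge argument that the paper itself defers to: the paper does not give its own proof of this theorem but simply cites~\cite{kempe2003maximizing}, whose proof is exactly the possible-world reachability argument you outline. So you are aligned with the paper's (referenced) approach, including the observation that the same machinery underlies Theorem~\ref{the:submodular}.
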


\begin{theorem}[Submodularity]\label{the:submodular}
For an arbitrary instance of the \IC{} or \LT{} model, the resulting influence function $\sigma(\cdot)$ is submodular.
In other words, for any $\mathsf{S}' \subset \mathsf{S}$ and $v \notin \mathsf{S}$, we have $\sigma(\mathsf{S} \cup v) - \sigma(\mathsf{S}) \leq \sigma(\mathsf{S}' \cup v) - \sigma(\mathsf{S}')$.
\end{theorem}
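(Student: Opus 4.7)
My plan is to follow the classical Kempe--Kleinberg--Tardos \emph{live-edge} (a.k.a.\ possible-worlds) coupling argument and reduce the submodularity of $\sigma$ to the submodularity of a deterministic reachability function averaged over random subgraphs.

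First, for the \IC{} model I would construct a random edge-subgraph $X$ of \G{} by independently declaring each edge $(u,v)$ \emph{live} with probability $p_{u,v}$ and \emph{blocked} otherwise. A direct coupling shows that the distribution of the activated set starting from any seed set \Set{} in the \IC{} dynamics coincides with the distribution of the set $R_X(\mathsf{S})$ of vertices reachable from \Set{} via live edges in $X$. Consequently
\[
\sigma(\mathsf{S}) \;=\; \mathbb{E}_X\bigl[|R_X(\mathsf{S})|\bigr].
\]
For the \LT{} model the analogous construction is more delicate: each node $v$ independently picks at most one incoming edge, selecting $(u,v)$ with probability $p_{u,v}$ and picking no edge with probability $1-\sum_{u\in\mathsf{In}(v)} p_{u,v}$. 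I would prove by induction on the diffusion time step that, under a suitable coupling with the uniform-in-$[0,1]$ thresholds $\theta_v$, the set of activated nodes at each step again equals the set of nodes reachable from \Set{} in the resulting live-edge graph.

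Next, I would argue that for \emph{any} fixed outcome $X$ the function $f_X(\mathsf{S}) := |R_X(\mathsf{S})|$ is submodular as a function of \Set{}. This is a purely combinatorial claim: for $\mathsf{S}' \subseteq \mathsf{S}$ and $v \notin \mathsf{S}$ one has
\[
f_X(\mathsf{S}\cup\{v\}) - f_X(\mathsf{S}) \;=\; \bigl|R_X(\{v\}) \setminus R_X(\mathsf{S})\bigr|,
\]
and since $R_X(\mathsf{S}') \subseteq R_X(\mathsf{S})$ the set on the right can only grow when \Set{} is replaced by the smaller $\mathsf{S}'$, giving $f_X(\mathsf{S}\cup\{v\}) - f_X(\mathsf{S}) \leq f_X(\mathsf{S}'\cup\{v\}) - f_X(\mathsf{S}')$.

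Finally, since submodularity is preserved under non-negative linear combinations and in particular under taking expectations, the identity $\sigma(\mathsf{S}) = \mathbb{E}_X[f_X(\mathsf{S})]$ immediately lifts submodularity from each $f_X$ to $\sigma$, completing the proof. The main obstacle is the \LT{} case: establishing the equivalence between the threshold dynamics and the one-incoming-edge live-edge process is not syntactically obvious and requires a careful step-by-step coupling (and using that conditioning on the history keeps the remaining in-edge choice uniform among the still-eligible alternatives). The \IC{} reduction and the submodularity of reachability are straightforward once the coupling is in place, so essentially all of the work is concentrated in justifying the \LT{} live-edge reformulation.
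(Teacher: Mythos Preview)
Your proposal is correct and is precisely the Kempe--Kleinberg--Tardos live-edge argument; the paper itself does not give a proof of this theorem but simply omits it and refers the reader to~\cite{kempe2003maximizing}, which is exactly the argument you have outlined. So there is nothing to compare against beyond noting that you have reproduced the cited proof faithfully, including correctly flagging the \LT{} live-edge equivalence as the only nontrivial step.
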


The marginal gain of node $v$ w.r.t. seed set $\mathsf{S}$ is $mg(v | \mathsf{S}) = \sigma(\mathsf{S} \cup v) - \sigma(\mathsf{S})$.
We omit the proofs of Theorem~\ref{the:hardness},~\ref{the:monotone} and~\ref{the:submodular}, and refer interested reader to~\cite{kempe2003maximizing}.

\begin{figure*}
  \small
  \centering
  \begin{tabular}{cc}
    \includegraphics[width=0.280\columnwidth]{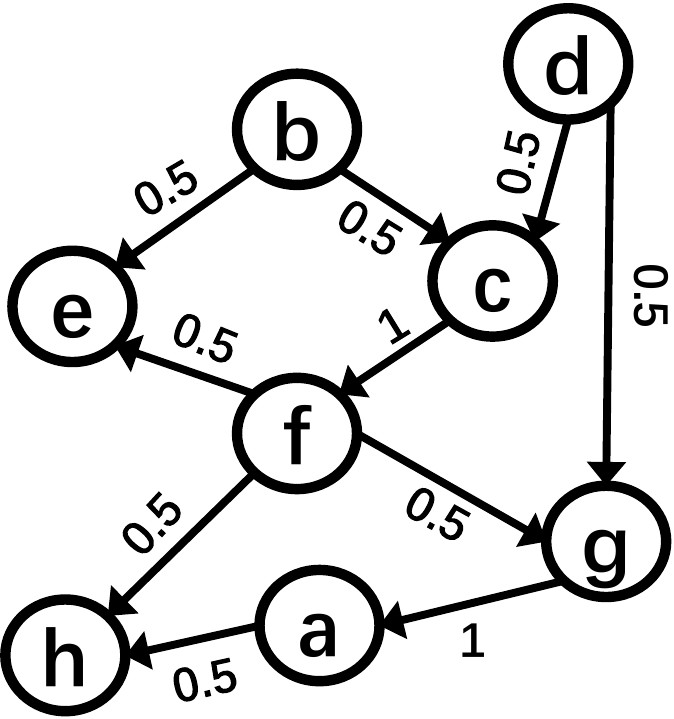}
    &
    \includegraphics[width=1.55\columnwidth]{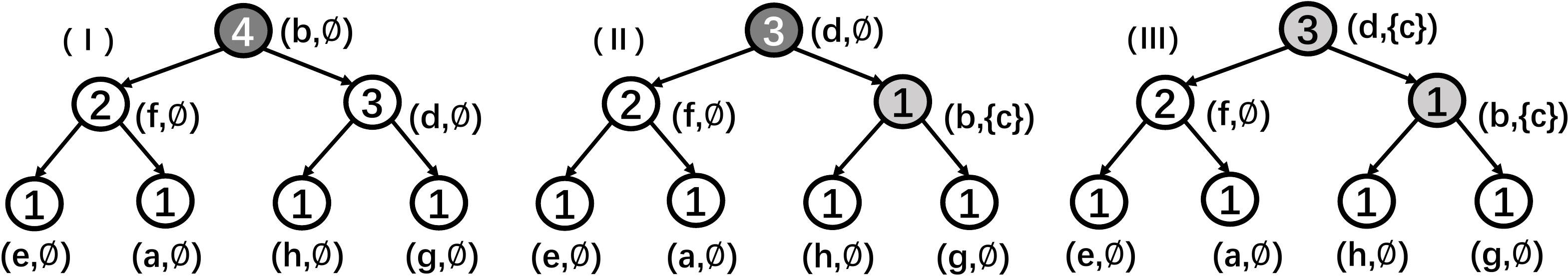} \\
    (a) Social network \G{} & (b) \CELF{} idea illustration
  \end{tabular}
  \caption{Monte-Carlo simulation illustration example}\label{fig:mc}
\end{figure*}

\subsection{Approximate Approaches for \IMP{}}\label{sec:eximp}
Due to the hardness to find the optimal solution for \IMP{} (cf. Theorem~\ref{the:hardness}), a plethora of techniques~\cite{leskovec2007cost,goyal2011celf++,cheng2013staticgreedy,ohsaka2014fast,borgs2014maximizing,tang2014influence,tang2015influence,zhou2015upper,nguyen2016stop,huang2017revisiting} have been proposed to \IMP{} with theoretical approximate bound.
In this section, we briefly introduce the key ideas of each category of approximate approaches.% (cf. Figure~\ref{fig:category}).

For each approach, we conduct extensive preliminary experiments and present the experimental findings to reveal the underlying issues.

\stitle{Monte-Carlo Simulation-based \GREEDY{} and \CELF{}} \GREEDY{}~\cite{kempe2003maximizing} is the first approach which employs Monte-Carlo simulation method to address \IMP{}.
The sketch of \GREEDY{} is shown in Algorithm~\ref{alg:greedy}.
\GREEDY{} selects the node which has the largest marginal gain by Monte-Carlo simulation (Line~\ref{alg:every}) during each node selection iteration.
In order to reduce the pain from unguaranteed submodularity~\cite{cheng2013staticgreedy} during Monte-Carlo simulations,
\GREEDY{} runs Monte-Carlo simulation $r$ times for each node $v$, typically, $r$ is 10,000 or 20,000~\cite{kempe2003maximizing,leskovec2007cost,cheng2013staticgreedy}.
Thus, the time cost of \GREEDY{}  is extremely expensive. %(Algorithm~\ref{alg:greedy})

\begin{algorithm}
    \caption{Greedy($\mathsf{G}(V,E,W)$, $k$, \M{})} \label{alg:greedy}
    \begin{algorithmic}[1]
    \State Initialize seed set $\mathsf{S} \leftarrow \emptyset$
    \State $i \leftarrow 1$
    \For{ $i$ from 1 to $k$} \label{alg:iteration}
        \State $\mathsf{S} \leftarrow \mathsf{S} \cup \argmax_{\forall v \in V}\{\sigma(\mathsf{S} \cup v) - \sigma(\mathsf{S})\}$ under \M{} \label{alg:every}
    \EndFor
    \State Return $\mathsf{S}$
    \end{algorithmic}
\end{algorithm}

\CELF{}~\cite{leskovec2007cost} is devised to improve the time efficiency of \GREEDY{}.
It exploits the submodularity of \IMP{} (cf. Theorem~\ref{the:submodular}) to reduce a lot of unnecessary marginal gain computations.
Consider social network \G{} in Figure~\ref{fig:mc}(a), \CELF{} finds the node with the largest marginal gain at the beginning (i.e., $\mathsf{S}=\emptyset$).
\CELF{} maintains a max-heap for the marginal gains of each node w.r.t. seed set $\mathsf{S}$.
Seed set $\mathsf{S}$ is $\{c\}$ after the first seed selection iteration as $mg(c | \emptyset)$ is the largest.
\CELF{} maintains the rest max-heap by removing node $c$, as illustrated in Figure~\ref{fig:mc}(b-I).
At the second iteration, \CELF{} gets the root of the max-heap $mg(b | \emptyset)=4$.
 \CELF{} computes node $b$'s marginal gain with seed set $\mathsf{S} = \{ c \}$, i.e., $mg(b | \{c\})=1$, and updates the max-heap accordingly (cf. Figure~\ref{fig:mc}(b-II)).
The max-heap root turns to $mg(d | \emptyset )= 3$, \CELF{} then updates it to $mg(d | \{ c \} )= 3$.
For any descendant of the root in max-heap, their marginal gains must be smaller than $3$ due to the submodularity of \IMP{}.
Thus, $d$ has the largest marginal gain with seed set $\{c\}$, and it is selected at the second seed selection iteration, i.e., $\mathsf{S}=\{c,d\}$.
In summary, \CELF{} works in a lazy manner.
It only computes the marginal gain of node $v$ with the latest seed set $\mathsf{S}$ when it is necessary,
e.g., \CELF{} only computes the marginal gains of $b$ and $d$ at the second iteration in the above example,
{it reduces lots of unnecessary marginal gain computations.}
Hence, \CELF{} is faster than \GREEDY{} many orders of magnitude (i.e., 700 times~\cite{leskovec2007cost})
However, \CELF{} is still not feasible to large networks, e.g., a social network with 1 million nodes.

\stitle{Experimental evaluations and findings}
{ We test \CELF{} on \textsf{NetHEPT} with 15K nodes and 62K edges, and it does not return the size-50 seed set within 30 hours.}
Monte-Carlo simulation based approaches (i.e., \GREEDY{} and \CELF{}) incur expensive computation time and low memory consumption,
and provide theoretical guaranteed approximate solutions for \IMP{}.

\begin{figure}
  \centering
  \begin{tabular}{cc}
    \includegraphics[width=0.30\columnwidth]{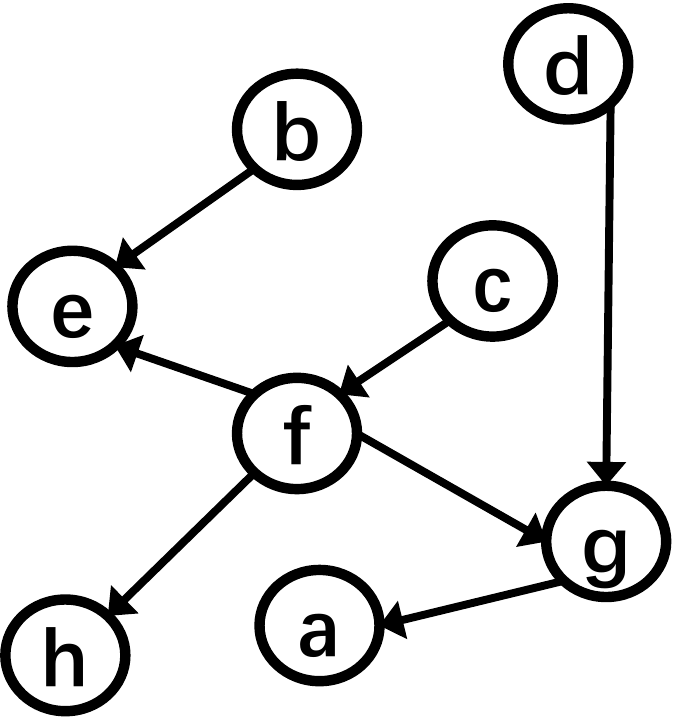}
    ~~~
    &
    ~~~
    \includegraphics[width=0.30\columnwidth]{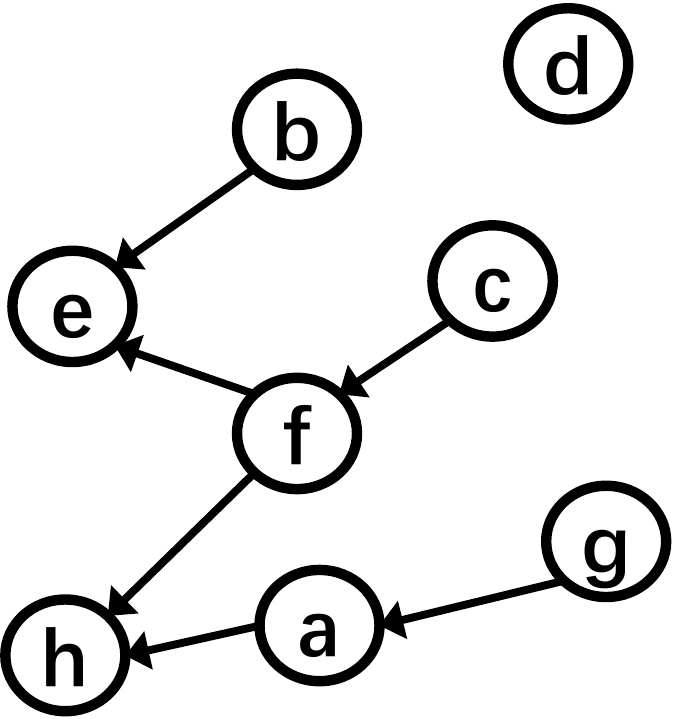} \\
    (a) Snapshot $\mathsf{G'_1}$ & (b) Snapshot $\mathsf{G'_2}$
  \end{tabular}
   
  \caption{Snapshots generation by coin-flip technique}\label{fig:snapshot}
   
\end{figure}

\stitle{Snapshots-based \SG{} and \PMC{}}
Snapshots-based approaches (\SG{}~\cite{cheng2013staticgreedy} and \PMC~{}\cite{ohsaka2014fast}) are proposed to improve the time efficiency of Monte-Carlo simulation based approaches.
\SG{} is the first approach that applies snapshots idea to address \IMP{}.
Instead of running lots of Monte-Carlo simulations in~\GREEDY{} and \CELF{}, \SG{} samples $r$ snapshots of input social network \G{} by coin flip technique.
\SG{} flips all coins with bias $p_{u,v}$ to produce several snapshots in advance,
e.g., Figure~\ref{fig:snapshot}(a) and (b) are the snapshots of the original social graph in Figure~\ref{fig:mc}(a).
\SG{} selects seed nodes iteratively by (1) computing the marginal influence of each node $v$ by averaging the total reachable nodes of $v$ in all snapshots,
(2) selecting the node with the largest average marginal influence, and (3) remove the node and all its reachable nodes in all snapshots.
For example, In Figure~\ref{fig:snapshot}, node $c$ has largest average reachable nodes (i.e., 5) as its reachable nodes in snapshots $\mathsf{G}'_1$  and $\mathsf{G}'_2$ are $\{c,f,e,g,a,h\}$ and $\{c,f,e,h\}$, respectively.
Then \SG{} selects node $c$ and removes its reachable nodes in Figure~\ref{fig:snapshot}(a) and (b) before the second iteration.

\PMC{} improves \SG{} by reducing the memory consumption overhead of the snapshots in \SG{}.
Particularly, \PMC{} generates the Directed Acyclic Graph (DAG) of each snapshot by identifying the strongly connected components (SCC) in it.
However, the space consumption improvement extent of \PMC{} depends on the connectivity of original graph and its snapshots.
\SG{} and \PMC{} guarantee $(1-1/e-\epsilon)$ approximation ratio as the proofs in~\cite{li2018influence} and~\cite{ohsaka2014fast}, respectively.

\begin{figure} %[ht]
\small
\begin{center}
		\includegraphics[width=0.6\columnwidth]{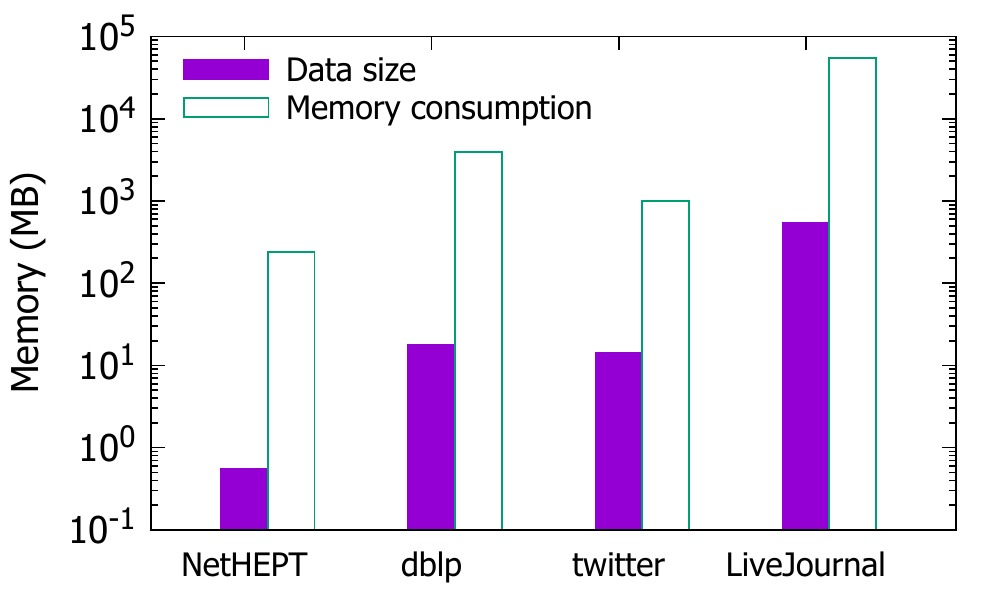}
        
\caption{\PMC{} memory consumption vs. raw data size} \label{fig:pmc_mem}
\end{center}
 
\end{figure}

\stitle{Experimental evaluations and findings}
We run \PMC{} by setting $r=200$ ~\cite{cheng2013staticgreedy} on four benchmark datasets (cf. Section~\ref{sec:setting}).
Figure~\ref{fig:pmc_mem} shows the memory consumption of \PMC{} and the raw data size.
The memory consumption of \PMC{} is 55X to 214X of raw data size.
For example, the size of \textsf{LiveJournal} is 0.5G, its \PMC{} memory consumption is 42.9G.
It is unaffordable for large even median social networks in commodity PCs with 16G or 32G memory.
Snapshots-based approaches (i.e., \SG{} and \PMC{}) achieve good time efficiency by incurring huge memory consumption,
and provide approximate ratio guaranteed solutions for \IMP{}.

\stitle{Reverse Influence Sampling-based \IMM{} and \DSSA{}}
Borgs et al.~\cite{borgs2014maximizing} is the first to propose reverse influence sampling (\RIS{}) method for \IMP{} under the \IC{} and \WC{} model.
The core concept in \RIS{} is reverse reachable set (\RR{} set).
Formally, the \RR{} set of node $v$ is the set of nodes in \G{} that can reach $v$, i.e., $\forall u \in \mathsf{RR}(v)$, there is a path from $u$ to $v$ in \G{}.
\RIS{} method includes two phases: (i) \RR{} sets generation phase, and (ii) node selection phase.
Consider that we run \RIS{} on the social network \G{} in Figure~\ref{fig:mc}(a).
For the \RR{} set generation phase, we first transpose \G{} in Figure~\ref{fig:mc}(a) to $\mathsf{G}^T$ in Figure~\ref{fig:rrset}(a).
\RIS{} randomly picks a node in $\mathsf{G}^T$ and run Monte-Carlo simulation from it to generate its reachable set, i.e., $\mathsf{RR}(e)=\{e,f,c\}$.
\RIS{} repeats the above procedure several times to generate the \RR{} sets, as shown in Figure~\ref{fig:rrset}(b).
For node selection phase, \RIS{} solves the max-coverage problem~\cite{khuller1999budgeted} to select $k$ nodes to cover the maximum number of generated \RR{} sets in above phase.
For example, node $c$ is selected as it covers the maximum number of \RR{} sets (i.e., 3) in Figure~\ref{fig:rrset}(b).
The reverse reachable set which covers node $c$ is marked to be ignored in subsequent seed selections.
Theoretically, \RIS{} returns $(1-1/e-\epsilon)$-approximation result~\cite{borgs2014maximizing} with at least a constant probability if the total examined number of nodes and edges reaches a pre-defined threshold $\tau$, which reflects the number of generated \RR{} sets indirectly.

\begin{figure}
  \centering
  \begin{tabular}{cc}
    \includegraphics[width=0.25\columnwidth]{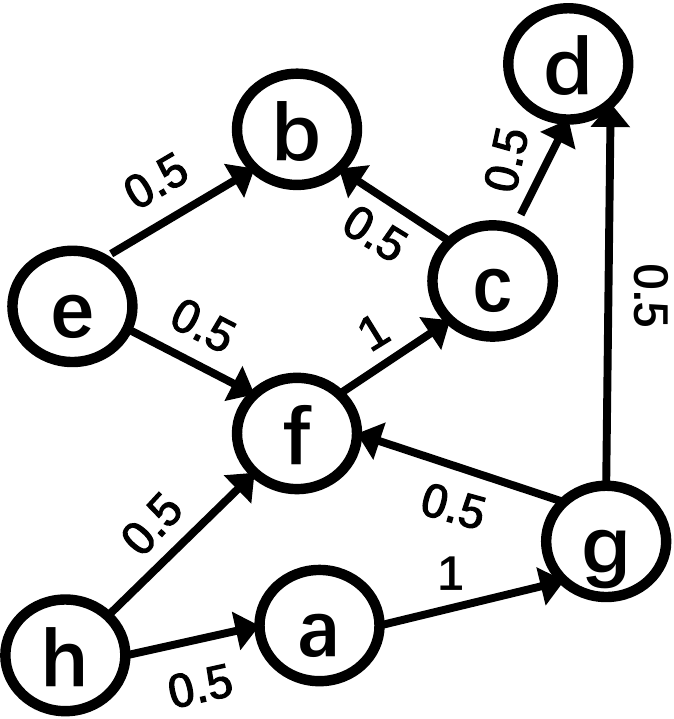}
    &
    \includegraphics[width=0.30\columnwidth]{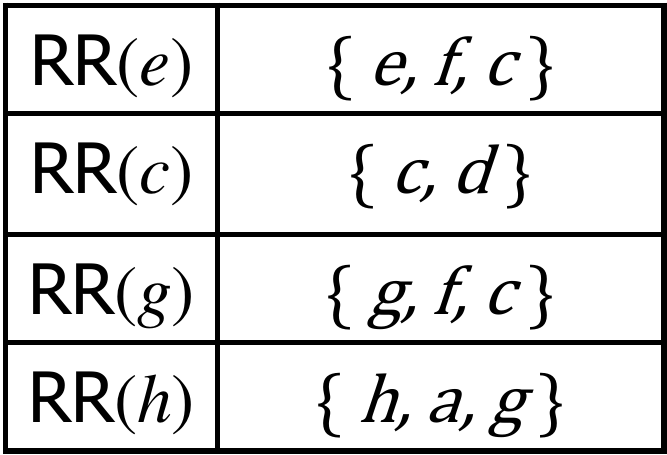} \\
    (a) Transpose graph (\G$^T$) & (b) Reverse reachable (\RR{}) sets
  \end{tabular}
  
  \caption{\RIS{} method illustration}\label{fig:rrset}
   
\end{figure}

Since there is a large hidden constant factor in the asymptotic time complexity of \RIS{}, which bound the practical efficiency of \RIS{}.
In order to address that, \cite{tang2014influence} proposed \emph{Two Phase Influence Maximization} (a.k.a., \TIM{}),
which returns $(1-1/e-\epsilon)$-approximation solution with at least $(1-n^l)$ probability, and it runs in $O((k+l)(m+n) \log n/\epsilon^2)$ times.
\TIM{} samples pre-decided $\theta$ \RR{} sets, instead of using threshold $\tau$ on computation cost to indirectly control the number in \RIS{}.
Later, \IMM{} exploits a classical statistical tool (martingales~\cite{williams1991probability}) to improve the parameters estimation phase in \TIM{}.
Since the number of generated \RR{} sets can be arbitrarily larger than theoretical thresholds $\theta$ in \TIM{} and \IMM{},
Nguyen et al.~\cite{nguyen2016stop} (i) unify the necessary sampled \RR{} sets size in~\cite{borgs2014maximizing,tang2014influence,tang2015influence} to guarantee $(1-1/e-\epsilon)$-approximation ratio,
and (ii) propose \DSSA{} to achieve the minimum number of \RR{} set samples.
Technically, \IMM{} and \DSSA{} adopt a bootstrap strategy to probe the sampling size of \RR{} sets.
The procedure is: (1) initialize $r$ \RR{} sets based on a given formula;
(2) select a size-$k$ seed set \Set{} by max-coverage algorithm;
(3) evaluate the coverage ratio of \Set{}.
If the coverage ratio is under the stopping condition, increase \RR{} sets size and repeat (2) and (3).
Otherwise, terminate and return \Set{}.
%Both \IMM{} and \DSSA{} are utilizing \RIS{} sampling techniques in~\cite{borgs2014maximizing},
The time efficiency and memory consumptions of \IMM{} and \DSSA{} heavily depend on the number of generated \RR{} sets.
Theoretically, the number of generated \RR{} sets is decided by two parameters:
(i) $\epsilon$, a large number of \RR{} sets will be generated to guarantee the theoretical bound when $\epsilon$ is small~\cite{li2018influence};
and (ii) $\rho$ (i.e., the influence probability of each edge $\rho/|$\inu{v}$|$).
Specifically, \IMM{} and \DSSA{} perform pretty good in conventional \WC{} model (i.e., $\rho=1$) as the expected number of node $v$'s direct reverse reachable neighbors is 1 as the influence probability from node $u$ to $v$ is $1/|$\inu{v}$|$.
However, their running times increase dramatically when $\rho$ scale up to 1.5~\cite{tang2017influence}.

\begin{figure}
  \centering
  \begin{tabular}{cc}
    \includegraphics[width=0.45\columnwidth]{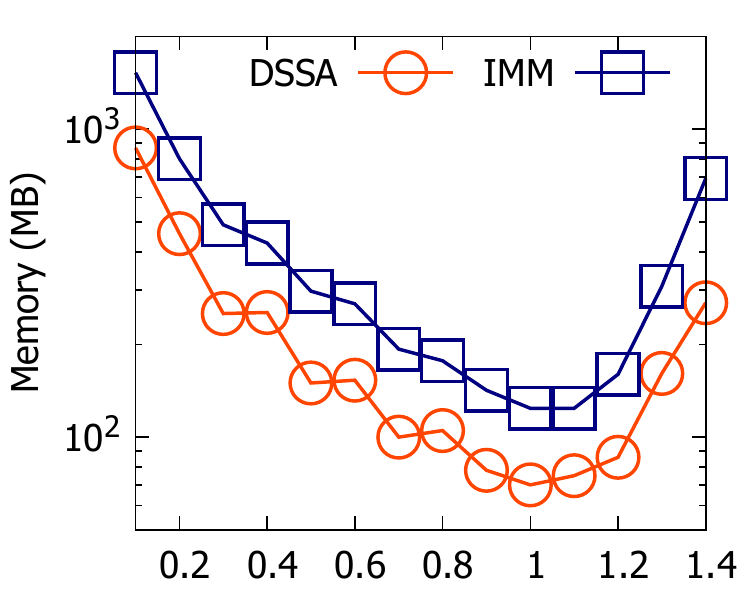}
    &
    \includegraphics[width=0.45\columnwidth]{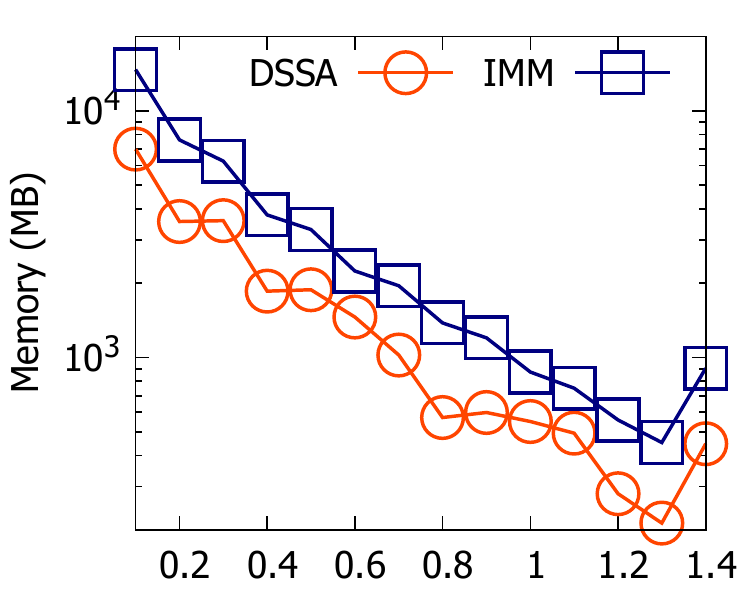}
    \\
    (a) In \textit{Twitter}& (b) In \textit{dblp}
    \\
    \includegraphics[width=0.45\columnwidth]{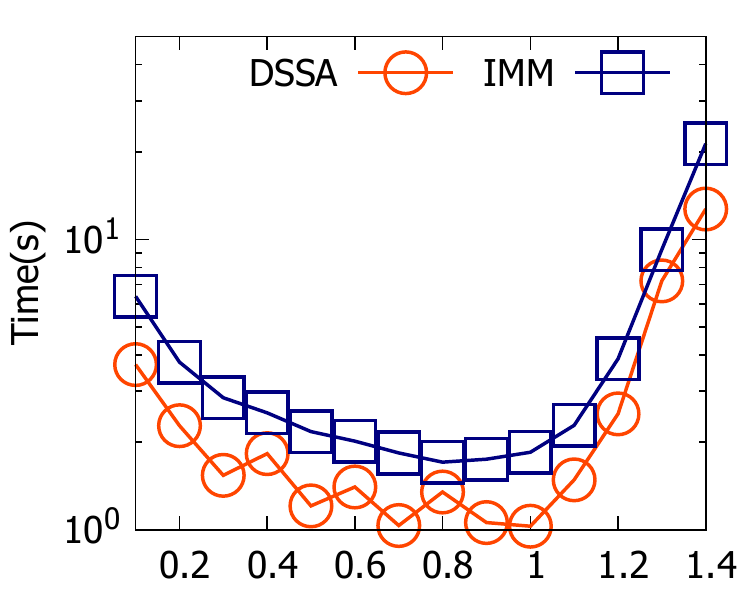}
    &
    \includegraphics[width=0.45\columnwidth]{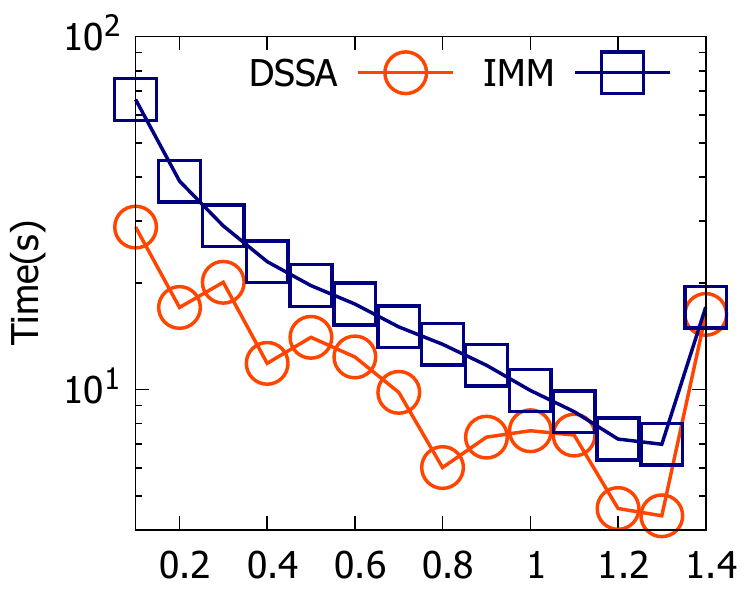}
    \\
    (c) In \textit{Twitter}& (d) In \textit{dblp}
  \end{tabular}
  \caption{\IMM{} and \DSSA{} evaluation by varying $\rho$}\label{fig:ris_mem}
   
\end{figure}

\stitle{Experimental evaluations and findings}
We evaluate the performance of \IMM{} and \DSSA{} methods in \WC{} model by varying the influence probability $\rho/|$\inu{v}$|$ on \twitter{} and \dblp{}.
The memory consumption of both \IMM{} and \DSSA{} are unaffordable when $\rho$ is scaling up (or down) as illustrated in Figure~\ref{fig:ris_mem}(a) and (b).
For example, the memory consumption of \IMM{} with $\rho=0.1$ is almost 12.3X and 16.8X over the cost of \IMM{} with $\rho=1.0$ in \textit{Twitter} and \textit{dblp}, respectively.
The time costs of \IMM{} and \DSSA{} by varying $\rho$ are shown in Figure~\ref{fig:ris_mem}(c) and (d).
Obviously, both approaches are degenerating seriously when scaling up or down the influence probability in each edge $(u,v)$, i.e., $w(u,v) = \rho/|$\inu{v}$|$.
Both \IMM{} and \DSSA{} perform pretty good in terms of time efficiency and memory consumption in common-used probability assignment method (i.e., $1/|$\inu{v}$|$ ) in diffusion models.
The reason is reverse influence sampling (\RIS{}) technique exploits the expected number of node $v$'s direct reverse reachable neighbors is 1 in conventional diffusion models implicitly.
The running time and memory consumption of \IMM{} and \DSSA{} are sensitive to the propagation probabilities (i.e., $\rho/|$\inu{v}$|$).
Both \IMM{} and \DSSA{} are impractical when $\rho$ scales up (or down), as the results shown in Figure~\ref{fig:ris_mem}.
Specifically, when $\rho < 1$, \IMM{} and \DSSA{} require more bootstrap iterations, however, each iteration generates double \RR{} sets.
For $\rho > 1$, the number of node in each generated \RR{} set by \IMM{} and \DSSA{} will increase dramatically as the strongly connected properties of the social network.

\subsection{Other Related Works}\label{sec:rel}
Influence maximization problem \IMP{} is first solved in algorithmic perspective by probability\cite{domingos2001mining}.
Beyond above discussed approximate approaches, there are many heuristic-based approaches~\cite{kimura2006tractable,galhotra2016holistic,chen2009efficient,chen2010scalableA,chen2010scalable,goyal2011simpath,jung2012irie,kim2013scalable,liu2014influence, tang2017influence}.
%In summary, heuristic-based approaches can be classified to rank score-based~\cite{page1998pagerank,freeman1978centrality,chen2009efficient,liu2014influence,jung2012irie} and simplify diffusion model-based~\cite{kimura2006tractable,chen2010scalableA,kim2013scalable,chen2010scalable,goyal2011simpath}.
We omit the details here and refer the interested readers to a recent survey~\cite{li2018influence}.
Very recently, several works~\cite{han2018efficient,tang2018online} are proposed for \IMP{} variants (e.g., online and adaptive \IMP{}),
we skip the discussion as the scope of this work is conventional \IMP{}. 

\section{Residual-based Approach}\label{sec:ours}
Existing approximate approaches for \IMP{} are compromising either time efficiency or memory overhead for result quality.
In this section, we propose a novel residual-based approach (i.e., \RCELF{}) for \IMP{} to overcome this dilemma.
We present the fundamental concepts of \RCELF{} approach in Section~\ref{sec:residual}.
In Section~\ref{sec:rcelfopt}, we describe the backbone of \RCELF{} and devise two performance optimization techniques for it.
We conduct correctness, complexity and approximate analysis of \RCELF{} in Section~\ref{sec:ana}.

\subsection{\RCELF{} Approach}\label{sec:residual}
Generally, each node $v \in V$ in social network contributes to the influence spread value of seed set \Set{},
i.e., $\sigma($\Set{}$)$, by either being selected as a seed node or being influenced by other seed nodes.
In this work, we propose a novel concept, \emph{node residual capacity}, to capture the contribution of each node to the influence spread value.
\RCELF{} selects the node $v \in V - \mathsf{S}$ with the largest marginal gain (based on node residual capacity) as a seed node at each iteration..
The residual capacity of each node diminishes during the seed node selection process.
In order to capture the contribution of each node, we define residual-based social network formally in Definition~\ref{def:rgraph}.

\begin{definition}[Residual based Social Network]~\label{def:rgraph}
Residual based social network \rG$(V,E,W,C)$ is a social graph \G$(V,E,$ $W)$ (cf. Definition ~\ref{def:sn}) with residual capacity set $C$.
Initially, the residual capacity of each node $v \in V$ in $C$ is $\mathsf{RC}(v)=1$.
\end{definition}

Given a diffusion model, the core subroutine of \IMP{} is marginal gain computation.
Given seed set $\mathsf{S}$, the marginal gain of node $u$ is computed by $mg(u|\mathsf{S}) = \delta(\mathsf{S} \cup \{u\}) - \delta(\mathsf{S})$ in literature.
In this work, \RCELF{} computes the marginal gain $mg(u|\mathsf{S})$ by exploiting the residual capacity of every nodes during each node selection iteration,
i.e., the contribution of each node $v \in V - \mathsf{S}$ to $mg(u| \mathsf{S})$.

Given residual based social graph \rG$(V,E,W,C)$ and seed set $\mathsf{S}$.
The marginal gain $mg(u | \mathsf{S})$ is contributed by two parts:
(i) node $u$, and (ii) the nodes which can be influenced by node $u$.
Intuitively, the contribution of node $u$ to $mg(u | \mathsf{S})$ is its residual capacity $\mathsf{RC}(u)$.
The contribution of other nodes ($i.e., \forall v \in V - (\mathsf{S} \cup u)$) to $mg(u | \mathsf{S})$ is a bit more intricate.
In subsequent sections, we present the marginal gain computations of \RCELF{} with \LT{} and \IC{} model, respectively.

\begin{figure}
  \centering
  \begin{tabular}{cc}
    \includegraphics[width=0.35\columnwidth]{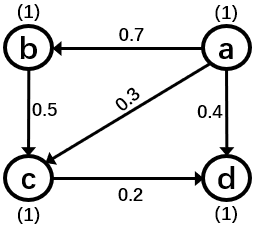}
    ~~~~
    &
    ~~~~
    \includegraphics[width=0.35\columnwidth]{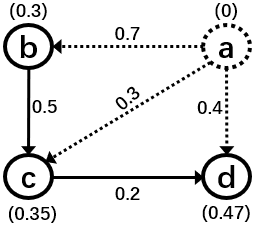}\\
    (a) $\mathsf{S}=\emptyset$ & (b) $\mathsf{S}=\{a\}$
  \end{tabular}
  \caption{\RCELF{} illustration}\label{fig:ltrsd}
\end{figure}

\subsubsection{\RCELF{} Marginal Gain Computation in \LT{} Model}  
In \LT{} model, each node $v$ uniformly chooses a threshold $\theta_v$ from the range $[0,1]$.
It can be activated if $\sum_{v\text{'s activated neighbor} u} w(u,v) \geq \theta_{v}$. 
Take Figure~\ref{fig:ltrsd}(a) as an example, 
the probability of node $b$ can be activated by node $a$ is $\mathsf{Pr}[w(a,b) \geq \theta_{b}] = \mathsf{Pr}[0.7 \geq \theta_{b}] = 0.7$, as $\theta_b$ is uniformly chooses from $[0,1]$.
In $\LT{}$ model, $\forall u \in V, \theta_u$ is independent and identically distributed.
Hence, given a path $\mathsf{P} = \{v_1, v_2,$ $\cdots, v_m \}$, 
the probability of node $v_m$ can be activated by node $v_1$ is  $\mathsf{Pr}_{v_1}(\mathsf{P})= \prod_{i=1}^{m-1}w(v_i, v_{i+1})$.
Definition~\ref{def:ltpathprob} defines the active / influence probability in \LT{} model formally.

\begin{definition}[Influence Probability in \LT{}]\label{def:ltpathprob}
Given residual network \rG$(V,E,W,C)$, the probability of $v_1$ activates $v_m$ through path $\mathsf{P} = \{v_1, v_2, \cdots, v_m \}$ is $\mathsf{Pr}_{v_1}(\mathsf{P})= \prod_{i=1}^{m-1}w(v_i, v_{i+1})$.
 Set $\mathbb{P}$ includes all paths from node $v_1$ to $v_m$ in \rG, 
the overall probability that node $v_{1}$ influences node $v_{m}$ is $\mathsf{Pr}(v_1,v_m)= \sum_{\mathsf{P} \in \mathbb{P}}\mathsf{Pr}_{v_1}(\mathsf{P})$.
\end{definition}

\stitle{Example} In Figure~\ref{fig:ltrsd}(a), node $a$ can reach node $d$ in three paths, i.e.,
$\mathsf{P}_{1}=\{a, d\}$, $\mathsf{P}_{2}=\{a, c, d\}$ and $\mathsf{P}_{3}=\{a, b, c, d\}$.
The probability of node $a$ can influence node $d$ is $\mathsf{Pr}(a,d)=\mathsf{Pr}_{a}(\mathsf{P}_{1})+\mathsf{Pr}_{a}(\mathsf{P}_{2})+\mathsf{Pr}_{a}(\mathsf{P}_{3})  =  0.4+0.3*0.2+0.7*0.5*0.2=0.53$.

Definition~\ref{def:actProb} shows the contribution of node $v$ to marginal gain $mg(u | \mathsf{S})$ in \LT{} model.  

\begin{definition}[Node Contribution in \LT{}]\label{def:actProb}
The contribution of node $v$ to marginal gain $mg(u | \mathsf{S})$ is $\Phi(u,v)= \mathsf{RC}(u) \times\mathsf{Pr}(u,v)$.
$\Phi(u,v) = \mathsf{RC}(v)$ if $\mathsf{RC}(u) \times \mathsf{Pr}(u,v) \geq \mathsf{RC}(v)$.
\end{definition}

\stitle{Example} In Figure~\ref{fig:ltrsd}(b), node $c$'s contribution to marginal gain $mg(b|\{a\})$ is {$\Phi(b,c) =  \mathsf{RC}(b) \times\mathsf{Pr}(b,c) = 0.3*0.5=0.15$}.

Formally, given a residual social network \rG$(V,E,W,C)$ and seed set \Set{}, 
the marginal gain of node $u$ consists of i) the residual capacity of node $u$, and
ii) node contributions from other influenced nodes. Specifically, 
$$ mg(u | \mathsf{S}) = \mathsf{RC}(u) + \sum_{v \in V - (\mathsf{S} \cup \{u\})} \Phi(u,v).$$

%\begin{definition}[Marginal Gain $mg(u | \mathsf{S})$]~\label{def:ltsigma}
%Given a residual social network \rG$(V,E,W,C)$ and seed set \Set{}, the marginal gain $.
%\end{definition}

\stitle{Example} In Figure~\ref{fig:ltrsd}(b), the marginal gain $mg(b | \{a\}) = \mathsf{RC}(b) + \Phi(b,c) + \Phi(b,d) = 0.3 + 0.3*0.5+ 0.3*0.5*0.2 = 0.48$.

\subsubsection{\RCELF{} Marginal Gain Computation in \IC{} Model}   
Comparing to the contribution of node $v$ to $mg(u | \mathsf{S})$ in \LT{} model, it is more complex in \IC{} model.
The reason is that the active node $u$ will definitely influence its inactive neighbor $v$ in \LT{} model, i.e., $\theta_{v} = \theta_{v} - w(u,v)$.
However, the active node $u$ may not influence its inactive neighbor $v$ as the influence process is a random coin-flip process with bias $w(u,v)$ in \IC{} model.

To illustration, consider the probability that node $c$ can be activated by node $a$ in Figure~\ref{fig:ltrsd}(a) in both \LT{} and \IC{} model.
There are two paths from $a$ to $c$, i.e., $\mathsf{P}_{1}=\{a,c\}$ and $\mathsf{P}_{2}=\{a,b,c\}$ respectively.
In \LT{} model, the influence probability $\Phi(a,c) = \mathsf{Pr}_{a}(\mathsf{P}_1) + \mathsf{Pr}_{a}(\mathsf{P}_2) = 0.3+0.35=0.65$ by Definition~\ref{def:ltpathprob}.
I.e, the node $c$ will be activated if $\theta_{c} \leq 0.65$.
In \IC{} model, however, node $a$ influence node $c$ with probability $\mathsf{Pr}_{1}=w(a,c)=0.3$ via path $\mathsf{P}_1$ and with probability $\mathsf{Pr}_{2}= w(a,b) \times w(b,c)=0.35$ via path $\mathsf{P}_2$.
Thus, the total activated probability of node $c$ by node $a$ is $1-(1-\mathsf{Pr}_{1})(1-\mathsf{Pr}_{2}) = 1-0.7*0.65=0.545$ according to conditional probability theory.

To facilitate the discussion of node marginal gain contribution in \IC{} model,
we divide the reachable nodes of $u$ into two groups: (i) \emph{shared-nothing} set $\mathsf{SN}$ and (ii) \emph{shared-edge} set $\mathsf{SE}$.
For example, Figure~\ref{fig:ltrsd}(a), node $a$'s reachable node $c$ is \emph{shared-nothing} node as the paths from $a$ to $c$ does not share any edge, i.e., $\{a,c\}$ and $\{a,b,c\}$.
$b$ also is $a$'s \emph{shared-nothing} nodes.
However, $d$ is a \emph{shared-edge} node as paths $\{a,b,c,d\}$ and $\{a,c,d\}$ shared a common edge $e(c,d)$.
The node contribution of each node $v$ in \emph{shard-nothing} node set $\mathsf{SN}$ to the marginal gain $mg(u|\mathsf{S})$ as follows.
\begin{definition}[Influence Probability in Single Path]\label{def:pathprob}
Given a path $\mathsf{P}=\{v_1, v_2,$ $\cdots, v_m \}$ in \rG$(V,E,W,C)$, seed set $\mathsf{S}$,
The probability of node $v_1$ influences $v_m$ is $\mathsf{Pr}_{v_1}(\mathsf{P})=  \prod_{i=1}^{m-1}(\mathsf{RC}(v_i)w(v_i, v_{i+1}) )$ in \IC{} model.
\end{definition}

Suppose there is a set of paths $\mathbb{P}$ in which node $u$ can influence node $v \in \mathsf{SN}$ (i.e., $v$ in \emph{shared-nothing} node set).
According to conditional probability theory, the probability node $u$ influence node $v$ is
$$\mathsf{Pr}(u,v) = 1-\prod_{\mathsf{p} \in \mathbb{P}} (1 - \mathsf{Pr}_{u}(\mathsf{P})).$$
The total contribution of node $v$ to $mg(u | \mathsf{S})$ in \IC{} model is $\Phi(u,v)= \mathsf{RC}(v) \times \mathsf{Pr}(u,v)$.

\stitle{Example} In Figure~\ref{fig:ltrsd}(a), there are two paths from node $a$ to node $c$: $\mathsf{P}_1=\{a,b,c\}$ and $\mathsf{P}_2=\{a,c\}$.
The contribution of $c$ to $mg(a | \emptyset)$ is equivalent to $\mathsf{RC}(a) \times \mathsf{Pr}(a,c) = 1.0 * (1-(1-0.3)*(1-0.7*0.5)) = 0.545$.

For the nodes in \emph{shared-edge} set $\mathsf{SE}$, it is quite difficult to analyze the active probability from an active node $u$ at step $t$.
Fortunately, inspired by the \SG{} approach, the active probability of these nodes in \emph{shared-edge} set could be obtained by running $r$ times Monte-Carlo simulations.
We then define the marginal gain contribution of every node in \emph{shared-edge} set in Definition~\ref{def:secontribute}.

\begin{definition}[\emph{Shared-edge} Node Contribution]\label{def:secontribute}
Given residual-based graph \rG$(V,E,W,C)$ and seed set $\mathsf{S}$.
For each node $v \in \mathsf{SE}$, the influence probability $\Phi(u,v)$ is obtained by Monte-Carlo simulation.
The node contribution of node $v$ to $mg(u | \mathsf{S})$ in \IC{} model is $\Phi(u,v)= \mathsf{RC}(v) \times (\mathsf{RC}(u) \mathsf{Pr}(u,v)) $.
\end{definition}

Finally, the marginal gain $mg(u | \mathsf{S})$ in \IC{} model can be computed by %Definition~\ref{def:ltsigma}, e.g., 
$mg(u | \mathsf{S}) = \mathsf{RC}(u) + \sum_{v \in V - (\mathsf{S} \cup \{u\})} \Phi(u,v)$.

\subsubsection{Updating Node Residual Capacity}
During \RCELF{} seed node selection procedure, suppose node $u$ is selected as seed node at step $t$ (i.e., $u = \argmax_{v \in V-\mathsf{S}} mg(v |\mathsf{S})$),
the residual capacity of every $u$'s reachable node $v$ will be updated by $\mathsf{RC}(v) = \mathsf{RC}(v) - \Phi(u,v)$ accordingly, it will be ignored if $\mathsf{RC}(v) \leq 0$ in subsequent seed selections.

\subsection{Implementation and Optimizations}\label{sec:rcelfopt}
In this section, we present the sketch of \RCELF{} approach with two performance optimization techniques.

\stitle{\RCELF{} Approach} The sketch of our residual-based approaches \RCELF{} for \IMP{} with \LT{} and \IC{} model as follows:
\begin{enumerate}[leftmargin=*]
  \item Init influence max-heap $\mathcal{H}$, it builds a max-heap by using the marginal gain as key value.
  \item Identify seed node $u$ (i.e., $\argmax_{u \in V-\mathsf{S}} mg(u | \mathsf{S})$) by $\mathcal{H}$ efficiently, insert it into seed set $\mathsf{S} \leftarrow \mathsf{S} \cup \{u\}$.
  \item Update residual capacity of each node $v$ in \rG{}, i.e., $\forall v \in V - \mathsf{S}$, $\mathsf{RC}(v) \leftarrow \mathsf{RC}(v) - \Phi(u,v)$. Node $v$ will be discarded in \rG{} if $\mathsf{RC}(v) \leq 0$.
  \item Repeat Step (2) and (3), until $k$ seed nodes are selected.
\end{enumerate}

In the subsequent section, we improve the performance of \RCELF{} by (1) proposing an efficient marginal gain computation algorithm and (2) reducing max-heap update cost.

\begin{algorithm}
    \small
    \caption{\textsf{MCSMG}($\mathsf{rG}(V,E,W,C)$, $u$, $\mathsf{S}$, $r$)} \label{alg:MConRG}
    \begin{algorithmic}[1]     \label{alg:rmc_init}
    \State Initialize map $\Phi \leftarrow \emptyset$ \Comment{reachable times from $u$ to $v$}
    \For{$i$ from $1$ to $r$} \Comment{$r$ Monte-Carlo simulations}
        \State Queue $q.\textsf{enqueue}(u)$
        \While{!IsEmpty($q$)}
            \State node $\textsf{tmp} \leftarrow q.\textsf{dequeue}()$
       		\For{each $v$ in \outu{$\textsf{tmp}$} and it is inactive}
       			\If{$\textsf{RAND}() < \mathsf{RC}(\textsf{tmp}) \cdot w(\textsf{tmp},v)$}\label{alg:rmc_act}
       				\State $q.\textbf{enqueue}(v)$ \Comment{set $v$ as active}
       				\State $\Phi(u,v) \leftarrow \Phi(u,v) + 1$  \label{alg:phiuv}
       			\EndIf
       		\EndFor
        \EndWhile
    \EndFor
    \State $mg(u | \mathsf{S}) \leftarrow  \mathsf{RC}(u)$
    \For {each $\Phi(u,v)$ in $\Phi$}  \label{ln:phis}
        \State $\Phi(u,v) \leftarrow \Phi(u,v) / r *\mathsf{RC}(u)$
        \State $mg(u | \mathsf{S}) \leftarrow mg(u | \mathsf{S}) + \Phi(u,v)$
    \EndFor \label{ln:phie}
    \State Return $mg(u | \mathsf{S})$ and $\Phi$
    \end{algorithmic}
\end{algorithm}

\stitle{Efficient Marginal Gain Computation}
Intuitively, we compute the marginal gain of each node, i.e., for node $u$ and seed set \Set{},  $mg(u | \mathsf{S})$ is initialized to $\mathsf{RC}(u)$.
We enumerate all the paths from $u$ to each node $v \in V - \mathsf{S}$ and calculate $\Phi(u,v)$, then accumulate $\Phi(u,v)$ to $mg(u | \mathsf{S})$ in \LT{} model.
However, the computation cost is exponential to the number of edges in \rG{}. Hence, it is impractical in median or large social networks.
In addition, the contribution of \emph{shared-edge} nodes cannot compute exactly as \emph{shared-nothing} nodes in \IC{} model.
To address the above issues, we devise a Monte-Carlo simulation based marginal gain computation algorithm (cf. Algorithm~\ref{alg:MConRG}).
The main idea of Algorithm~\ref{alg:MConRG} is that it incorporates all $\Phi(u,v)$ computations in one batch Monte-Carlo simulation process.
Algorithm~\ref{alg:MConRG} shows the exact steps about $mg(u|\mathsf{S})$ computation in \IC{} model.
In each Monte-Carlo simulation, it takes the residual capacity of each inactive node $v$ into consideration by flipping coins with probability
$\mathsf{RC}(c) \cdot w(u,v)$ (cf. Line~\ref{alg:rmc_act}) instead of only $w(u,v)$.
$\Phi(u,v)$ counts the number of activated times of node $v$ among $r$ times simulation (cf. Line~\ref{alg:phiuv}).
Finally, for each node $v$, its $\Phi(u,v)$ compute as Definition~\ref{def:actProb} from Line~\ref{ln:phis} to Line~\ref{ln:phie}.
It is worthing to note Algorithm~\ref{alg:MConRG} is applicable to \LT{} models.
For example, we only need revise the node activation manner (cf. Line~\ref{alg:rmc_act}) for \LT{} model.

\stitle{Heap Updates Optimization}
\RCELF{} identifies the seed node $u$ (i.e., $\argmax_{u \in V-\mathsf{S}} mg(u | \mathsf{S})$) with max-heap $\mathcal{H}$.
The marginal gain of nodes in max-heap $\mathcal{H}$ need recompute as it is out-of-date after each seed node selection iteration.
i.e., the current marginal gain of node $u$ is computed with an out-of-date seed set, denoted by $\mathsf{S}_o$.
However, it should be $mg(u | \mathsf{S})$, where $S$ is latest seed set.
Thus, the performance of \RCELF{} approach is very sensitive to the number of node marginal gain computations in Step (2) to identify the next seed node $\argmax_{u \in V-\mathsf{S}} mg(u | \mathsf{S})$.
Here, we propose an upper bound for the marginal gain of node $u$ with latest seed set $\mathsf{S}$, denote by $\overline{mg}(u|\mathsf{S})$.
It reduces the number of marginal gain computations significantly.

\begin{lemma}[Upper Bound of $mg(u|\mathsf{S})$]\label{lem:sigma}
For each node $v \in V$, its residual capacity and marginal gain are $\mathsf{RC}_{o}(v)$ and $mg(v| \mathsf{S}_o)$ when the seed set is $\mathsf{S}_o$ at step $t-1$.
The seed set is $\mathsf{S}$ at step $t$ (i.e., $\mathsf{S}_o \subset \mathsf{S}$), the upper bound $\overline{mg}(u | \mathsf{S})$ is $\mathsf{RC}(u) + \mathsf{RC}(u) \sum_{v \in \mathsf{Out}(u)} w(u,v) * \mathsf{RC}_{o}(v) * mg(v| \mathsf{S}_o)$.
%\begin{align*}\nonumber
%\overline{mg}(u | \mathsf{S})  \\
%= \mathsf{RC}(u) + \mathsf{RC}(u) \sum_{v \in \mathsf{Out}(u)} w(u,v) * \mathsf{RC}_{o}(v) * mg(v| \mathsf{S}_o)
%\end{align*}

\begin{proof}
For each node $v \in V$, $\mathsf{RC}_{o}(v) \geq \mathsf{RC}(v)$ as the residual capacity is diminishing during seed node selection process.
With $mg(v| \mathsf{S}_o) \geq mg(v| \mathsf{S})$ where $\mathsf{S}_o \subset \mathsf{S}$ (by submodularity), we have:
\begin{small}
\begin{align*}
  \bar{mg}(u | \mathsf{S}) & = \mathsf{RC}(u) + \mathsf{RC}(u)\sum_{v \in \mathsf{Out}(u)} w(u,v) * \mathsf{RC}_{o}(v) * mg(v| \mathsf{S}_o)  \\
  & \geq \mathsf{RC}(u) + \mathsf{RC}(u)\sum_{v \in \mathsf{Out}(u)} w(u,v) * \mathsf{RC}(v) * mg(v| \mathsf{S})  \\
  & \geq \mathsf{RC}(u) + \sum_{v \in V - (\mathsf{S} \cup u)} \Phi(u,v)\\
  %~~~\text{By Colloray~\ref{cor:sigma}} \\
  & = mg(v| \mathsf{S}) %~~~\text{By Definition~\ref{def:ltsigma}}
\end{align*}
\end{small}
Thus, we have $\overline{mg}(u | \mathsf{S})\geq mg(u | \mathsf{S})$.
\end{proof}
\end{lemma}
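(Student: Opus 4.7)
The plan is to establish $\overline{mg}(u|\mathsf{S}) \geq mg(u|\mathsf{S})$ via a two-step chain. First I would replace the stale quantities $\mathsf{RC}_{o}(v)$ and $mg(v|\mathsf{S}_o)$ appearing in $\overline{mg}(u|\mathsf{S})$ with their current counterparts $\mathsf{RC}(v)$ and $mg(v|\mathsf{S})$, obtaining a lower-bound proxy. The legality of both substitutions rests on two monotonicity facts: (i) residual capacities are non-increasing along the execution, since the update rule in Section~\ref{sec:residual} subtracts the non-negative quantity $\Phi(\cdot,v)$ from $\mathsf{RC}(v)$ at each iteration, so $\mathsf{RC}_o(v) \geq \mathsf{RC}(v)$; and (ii) marginal gains are non-increasing in the seed set, which is exactly the submodularity of $\sigma$ stated in Theorem~\ref{the:submodular}, so $mg(v|\mathsf{S}_o) \geq mg(v|\mathsf{S})$ whenever $\mathsf{S}_o \subset \mathsf{S}$. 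Both substitutions therefore shrink the right-hand side in the direction we need.

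The second and substantive step is to argue that the proxy $\mathsf{RC}(u) + \mathsf{RC}(u) \sum_{v \in \mathsf{Out}(u)} w(u,v)\, \mathsf{RC}(v)\, mg(v|\mathsf{S})$ already dominates $mg(u|\mathsf{S}) = \mathsf{RC}(u) + \sum_{w \in V - (\mathsf{S}\cup\{u\})} \Phi(u,w)$. The intuition is a one-step unrolling of the influence paths: any path by which $u$ contributes to some downstream $w$ must first traverse an edge $(u,v)$ to an out-neighbor $v$, which contributes a factor $\mathsf{RC}(u)\, w(u,v)$, and the remaining propagation from $v$ onward is precisely what $\mathsf{RC}(v)\, mg(v|\mathsf{S})$ accounts for. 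I would make this precise by writing $\Phi(u,w)$ as a sum over paths out of $u$ (Definitions~\ref{def:ltpathprob} and~\ref{def:pathprob}), grouping the paths by their first edge $(u,v)$, and regrouping the resulting double sum as an outer sum over $v \in \mathsf{Out}(u)$.

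The main obstacle is that the regrouping is cleanest only in the \LT{} model, where Definition~\ref{def:ltpathprob} yields the clean recursion $\mathsf{Pr}(u,w) = \sum_{v \in \mathsf{Out}(u)} w(u,v)\, \mathsf{Pr}(v,w)$ and the bound becomes essentially an equality up to the $\mathsf{RC}(v)$-capping enforced by Definition~\ref{def:actProb}. In \IC{}, the aggregation of multiple disjoint-path contributions obeys $1-\prod_i(1-x_i)$ rather than $\sum_i x_i$, so to push the argument through uniformly I would invoke the elementary inequality $1-\prod_i(1-x_i) \leq \sum_i x_i$ for $x_i \in [0,1]$, together with the cap $\Phi(u,w) \leq \mathsf{RC}(w)$ of Definition~\ref{def:actProb} to absorb the residual factor on $w$. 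Chaining the two inequalities $\overline{mg}(u|\mathsf{S}) \geq (\text{proxy}) \geq mg(u|\mathsf{S})$ then closes the argument for both diffusion models.
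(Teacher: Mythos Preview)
Your proposal is correct and follows essentially the same two-step chain as the paper: first swap the stale $\mathsf{RC}_o(v)$ and $mg(v|\mathsf{S}_o)$ for their current values using residual-capacity monotonicity and submodularity, then argue the resulting proxy dominates $mg(u|\mathsf{S})$. The paper's own proof asserts the second inequality without elaboration, whereas you supply the path-unrolling justification (and the $1-\prod(1-x_i)\le\sum x_i$ bound for the \IC{} case); so your argument is a fleshed-out version of the same skeleton rather than a different route.
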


%\begin{proof}
%For each node $v \in V$, $\mathsf{RC}_{o}(v) \geq \mathsf{RC}(v)$ (cf. Step (3)).
%Since $\mathsf{S}_o \subset \mathsf{S}$, $mg(v| \mathsf{S}_o) \geq mg(v| \mathsf{S})$, then we have:
%\begin{small}
%\begin{align*}
%  \bar{mg}(u | \mathsf{S}) & = \mathsf{RC}(u) + \mathsf{RC}(u)\sum_{v \in \mathsf{Out}(u)} w(u,v) * \mathsf{RC}_{o}(v) * mg(v| \mathsf{S}_o)  \\
%  & \geq \mathsf{RC}(u) + \mathsf{RC}(u)\sum_{v \in \mathsf{Out}(u)} w(u,v) * \mathsf{RC}(v) * mg(v| \mathsf{S})  \\
%  & = \mathsf{RC}(u) + \sum_{v \in V/(\mathsf{S} \cup u)} \Phi(u,v) ~~~\text{By Definition~\ref{def:sigma}}\\
%  & = mg(v| \mathsf{S})
%\end{align*}
%\end{small}
%Thus, we have $\bar{mg}(u | \mathsf{S})\geq mg(u | \mathsf{S})$.
%\end{proof}

Consider $mg(u | \mathsf{S}_o)$ with seed set $\mathsf{S}_o$, the marginal gain upper bound $\overline{mg}(u | \mathsf{S})$ will be computed with constant cost at first,
then the max-heap $\mathcal{H}$ is updated with $\overline{mg}(u | \mathsf{S})$.
Algorithm~\ref{alg:MConRG} will be incurred to compute the exact marginal gain $mg(u|\mathsf{S})$ if and only if the root is $\overline{mg}(u | \mathsf{S})$.
Inherently, Lemma~\ref{lem:sigma} works as a filter which reduces lots of expensive exact marginal gain computations.

%\subsection{\RCELF{} with \LT{} model}~\label{sec:lt}

\subsection{\RCELF{} Analysis}\label{sec:ana}
We analyze the property of influence spread function $\delta(\cdot)$ in \RCELF{} at Theorem~\ref{the:delta}, then prove the result accuracy guarantee of \RCELF{} in Lemma~\ref{lem:ratio}.

\begin{theorem}~\label{the:delta}
The influence spread function $\delta(\cdot)$ in \RCELF{} with \IC{} model is  (i) \emph{non-negative}, (ii) \emph{monotone}, and (iii) \emph{submodular}.
\end{theorem}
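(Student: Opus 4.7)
The plan is to verify the three properties in order, leaning on the explicit marginal gain formula $mg(u|\mathsf{S}) = \mathsf{RC}(u) + \sum_{v \in V \setminus (\mathsf{S} \cup \{u\})} \Phi(u,v)$ introduced in Section~\ref{sec:residual} and on the bookkeeping rule $\mathsf{RC}(v) \leftarrow \mathsf{RC}(v) - \Phi(u,v)$ (with $v$ discarded once $\mathsf{RC}(v) \leq 0$). Throughout, let $\mathsf{RC}_{\mathsf{S}}(\cdot)$ denote the residual capacity after the seed set $\mathsf{S}$ has been processed, and write $\delta(\mathsf{S}) = \sum_{i=1}^{|\mathsf{S}|} mg(u_i \mid \mathsf{S}_{i-1})$ for the telescoping sum of marginal gains along the selection order.

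First I would dispatch non-negativity. Since every $\mathsf{RC}_{\mathsf{S}}(v)$ remains in $[0,1]$ by construction (capacities start at $1$, are decremented only by non-negative $\Phi$ terms, and are clipped at $0$) and $\mathsf{Pr}(u,v) \in [0,1]$, each $\Phi(u,v) = \mathsf{RC}_{\mathsf{S}}(u)\,\mathsf{RC}_{\mathsf{S}}(v)\,\mathsf{Pr}(u,v) \geq 0$. Hence $mg(u \mid \mathsf{S}) \geq 0$, and a one-line induction on $|\mathsf{S}|$ gives $\delta(\mathsf{S}) \geq 0$. The same non-negativity of each marginal gain immediately yields monotonicity: $\delta(\mathsf{S} \cup \{u\}) - \delta(\mathsf{S}) = mg(u \mid \mathsf{S}) \geq 0$.

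The real work is submodularity, for which it suffices to show $mg(v \mid \mathsf{S}) \leq mg(v \mid \mathsf{S}')$ whenever $\mathsf{S}' \subseteq \mathsf{S}$ and $v \notin \mathsf{S}$. My strategy is to first establish the intermediate claim
\begin{equation*}
\mathsf{S}' \subseteq \mathsf{S} \;\Longrightarrow\; \mathsf{RC}_{\mathsf{S}}(u) \leq \mathsf{RC}_{\mathsf{S}'}(u)\ \text{for every } u \in V.
\end{equation*}
This is plausible because every update subtracts a non-negative quantity, but it must be proved carefully because $\Phi$ itself depends on the current capacities, so residual capacities could in principle be order-dependent. I would prove the claim by induction on $|\mathsf{S}| - |\mathsf{S}'|$, reducing to the single-seed case $\mathsf{S} = \mathsf{S}' \cup \{w\}$. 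For that base case the update rule directly gives $\mathsf{RC}_{\mathsf{S}}(u) = \mathsf{RC}_{\mathsf{S}'}(u) - \Phi_{\mathsf{S}'}(w, u) \leq \mathsf{RC}_{\mathsf{S}'}(u)$, since $\Phi_{\mathsf{S}'}(w,u) \geq 0$, with the discard convention preserving the inequality at the $0$-boundary. Once this capacity monotonicity is in hand, submodularity follows by dominating each term of $mg(v \mid \mathsf{S})$ by the corresponding term of $mg(v \mid \mathsf{S}')$: $\mathsf{RC}_{\mathsf{S}}(v) \leq \mathsf{RC}_{\mathsf{S}'}(v)$ handles the leading term, and for each other node $w$,
\begin{equation*}
\Phi_{\mathsf{S}}(v, w) = \mathsf{RC}_{\mathsf{S}}(v)\,\mathsf{RC}_{\mathsf{S}}(w)\,\mathsf{Pr}(v,w) \leq \mathsf{RC}_{\mathsf{S}'}(v)\,\mathsf{RC}_{\mathsf{S}'}(w)\,\mathsf{Pr}(v,w) = \Phi_{\mathsf{S}'}(v, w),
\end{equation*}
so summing yields $mg(v \mid \mathsf{S}) \leq mg(v \mid \mathsf{S}')$.

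The main obstacle will be the order-sensitivity of the residual capacities: $\mathsf{RC}_{\mathsf{S}}(u)$ is defined via a sequence of updates and a priori depends on the order in which seeds are inserted. I would handle this by reinterpreting $1 - \mathsf{RC}_{\mathsf{S}}(u)$ as a cumulative (order-independent) upper bound on the probability that $u$ has been influenced by $\mathsf{S}$ in the underlying IC diffusion, so that adding a new seed only enlarges the influence probability and therefore only shrinks the capacity. This interpretation, combined with the inductive argument above, closes the gap between the sequential update and the set-based statement needed for submodularity, completing the proof of all three properties of $\delta(\cdot)$.
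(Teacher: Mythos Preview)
Your argument follows the same skeleton as the paper's: non-negativity and monotonicity are read off from $mg(u\mid\mathsf{S})\geq 0$, and submodularity is reduced to the diminishing-capacity claim $\mathsf{RC}_{\mathsf{S}}(\cdot)\leq\mathsf{RC}_{\mathsf{S}'}(\cdot)$ for $\mathsf{S}'\subseteq\mathsf{S}$, which the paper simply asserts by pointing to the update rule in Step~(3) and then writes out the implication $mg(u\mid\mathsf{S})\geq mg(u\mid\mathsf{S}')$. Your version is strictly more careful---you make the term-by-term domination explicit and you flag the order-sensitivity of the sequential $\mathsf{RC}$ updates, a subtlety the paper's proof does not confront (one minor slip: in the \IC{} model $\mathsf{Pr}(v,w)$ itself depends on intermediate residual capacities via Definition~\ref{def:pathprob}, so it too shrinks with $\mathsf{S}$, which only strengthens your inequality).
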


\begin{proof}
Suppose the selected seed nodes  from $1$st iteration to $k$th iteration are $v_1$, $v_2$, $\cdots$, $v_k$.
The corresponding seed sets are $\mathsf{S}_1, \mathsf{S}_2, \cdots, \mathsf{S}_k$, and $\mathsf{S}_0 = \emptyset$.
Thus, $\delta(\mathsf{S}_k) = mg(v_k | \mathsf{S}_{k-1}) + \delta(\mathsf{S}_{k-1}) = \sum_{i=1}^{i=k} mg(v_i | \mathsf{S}_{i-1})$.
$\forall i \in  [1, k], mg(v_i | \mathsf{S}_{i-1}) \geq 0$ in \RCELF{} approach.
Then, $\delta(\mathsf{S}) \geq 0$ and $\forall~\mathsf{S} \subseteq \mathsf{S}'$, $\delta(\mathsf{S}) \leq \delta(\mathsf{S}')$.
Hence, $\delta(\cdot)$ in \RCELF{} is (i) \emph{non-negative} and (ii) \emph{monotone}.

Since the residual capacity of each node $v$ will be diminished, cf. Step (3), during seed node selection process,
$\forall~ \mathsf{S} \subset \mathsf{S}'$ we have
\begin{align*}
& mg(u | \mathsf{S}) \geq mg(u | \mathsf{S}')  \\
\Rightarrow ~& \delta(\mathsf{S}) + mg(u | \mathsf{S}) - \delta (\mathsf{S}) \geq  \delta(\mathsf{S}') + mg(u | \mathsf{S}') - \delta (\mathsf{S}') \\
\Rightarrow  ~&\delta(\mathsf{S} \cup \{ u \}) -\delta (\mathsf{S}) \geq \delta(\mathsf{S}' \cup \{ u \}) - \delta(\mathsf{S}')
\end{align*}
Thus, $\delta(\cdot)$ in \RCELF{} is submodular, the proof complete.
\end{proof}

We then show the result accuracy guarantee of \RCELF{} in Lemma~\ref{lem:ratio} as follows.

\begin{lemma}~\label{lem:ratio}
	Let size-$k$ set $\mathsf{S}^*$ be optimal set of \IMP{}, i.e., $\delta(\mathsf{S}^*)$ has maximal value of all $k$-element sets.
	\RCELF{} returns size-$k$ set $\mathsf{S}$, which guarantees $$\delta(\mathsf{S}) \geq (1-1/e)*\delta(\mathsf{S}^*).$$
\end{lemma}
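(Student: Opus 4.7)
The plan is to reduce the lemma directly to the classical Nemhauser--Wolsey--Fisher (NWF) guarantee for the greedy algorithm on non-negative, monotone, submodular set functions under a cardinality constraint. That theorem states: if $f:2^V\to \mathbb{R}_{\geq 0}$ is non-negative, monotone, and submodular with $f(\emptyset)=0$, and if $\mathsf{S}_k$ is obtained by greedily picking, at each of $k$ iterations, the element of maximum marginal gain with respect to the current set, then $f(\mathsf{S}_k)\geq (1-1/e)\, f(\mathsf{S}^*)$ for any optimal $k$-set $\mathsf{S}^*$. So my task reduces to (i) checking the hypotheses for $\delta(\cdot)$, and (ii) confirming that \RCELF{} is in fact executing the greedy rule on $\delta$.

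First, I would invoke Theorem~\ref{the:delta}, which already shows that the residual-based influence spread $\delta(\cdot)$ used by \RCELF{} is non-negative, monotone, and submodular. I would also note the boundary condition $\delta(\emptyset)=0$, which is required for the sharp $(1-1/e)$ constant and is immediate from the definition (no seed nodes means no contribution from residual capacities).

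Second, I would argue that \RCELF{}'s node-selection rule is exactly the greedy rule on $\delta$. At each iteration \RCELF{} picks $u^\star=\argmax_{u\in V\setminus \mathsf{S}} mg(u\mid \mathsf{S})$ where $mg(u\mid \mathsf{S})=\delta(\mathsf{S}\cup\{u\})-\delta(\mathsf{S})$. The max-heap with the upper-bound filter of Lemma~\ref{lem:sigma} is purely an efficiency device: whenever the current heap root is based on a stale upper bound $\overline{mg}(u\mid \mathsf{S})$, \RCELF{} recomputes the exact $mg(u\mid \mathsf{S})$ via Algorithm~\ref{alg:MConRG} and reinserts it before extracting the next root. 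Since $\overline{mg}(v\mid \mathsf{S})\geq mg(v\mid \mathsf{S})$ for every other candidate $v$, at the moment an extracted root has its exact marginal gain and still dominates the heap, it must equal $\max_v mg(v\mid \mathsf{S})$. Hence \RCELF{} produces the same sequence of selections as the textbook greedy algorithm on $\delta$.

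With these two ingredients, I would finish by running the standard NWF induction. Writing $\mathsf{S}_0=\emptyset\subset \mathsf{S}_1\subset\cdots\subset \mathsf{S}_k=\mathsf{S}$, submodularity and monotonicity imply that some element of $\mathsf{S}^\star\setminus \mathsf{S}_i$ achieves marginal gain at least $(\delta(\mathsf{S}^\star)-\delta(\mathsf{S}_i))/k$, so the greedy pick does at least as well, giving
\begin{equation*}
\delta(\mathsf{S}^\star)-\delta(\mathsf{S}_{i+1})\;\leq\;\left(1-\tfrac{1}{k}\right)\bigl(\delta(\mathsf{S}^\star)-\delta(\mathsf{S}_i)\bigr).
\end{equation*}
Iterating $k$ times and using $(1-1/k)^k\leq 1/e$ together with $\delta(\mathsf{S}_0)=0$ yields $\delta(\mathsf{S})\geq (1-1/e)\delta(\mathsf{S}^\star)$.

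The main obstacle, I expect, is step (ii) rather than the NWF calculation: one has to be certain that none of the optimizations (lazy evaluation, the $\overline{mg}$ filter, the batched Monte-Carlo marginal-gain estimator of Algorithm~\ref{alg:MConRG}) disturb the exact greedy choice. The upper-bound argument above handles the heap, but strictly speaking Algorithm~\ref{alg:MConRG} delivers only an unbiased estimate of $mg(u\mid \mathsf{S})$, so the clean $(1-1/e)$ bound in the lemma statement should be read as holding in expectation (or with high probability as the number of simulations $r$ grows), exactly as is standard for Monte-Carlo--based greedy influence maximization in the Kempe--Kleinberg--Tardos line of work.
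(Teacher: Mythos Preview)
Your proposal is correct and follows essentially the same route as the paper: the paper's proof simply invokes Theorem~\ref{the:delta} for non-negativity, monotonicity, and submodularity of $\delta(\cdot)$ and then cites the Nemhauser--Wolsey--Fisher/Kempe--Kleinberg--Tardos guarantee for greedy on submodular functions. You unpack that citation (spelling out the NWF induction and arguing that the lazy-evaluation and upper-bound heap optimizations preserve the exact greedy choice), but the core argument is identical.
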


\begin{proof}
Since the influence spread function $\delta(\cdot)$  in \RCELF{} is non-negative, monotone and submodular (cf. Theorem~\ref{the:delta}),
it guarantees that $\mathsf{S}$, returned from \RCELF{}, provides a $(1 - {1}/{e})$-approximation ratio, as proved in~\cite{nemhauser1978analysis,kempe2003maximizing}.
\end{proof}

Inherently, \RCELF{} works similar with \SG{} and \PMC{}.
For example,  \RCELF{} is built upon the residual-based graph.
The residual capacity diminishing process is similar to removing nodes and edges in generated snapshots in \SG{} and \PMC{}.
Inspired from~\cite{ohsaka2014fast}, we analyze the number of necessary Monte-Carlo simulations in \RCELF{} to guarantee a $(1-1/e-\epsilon)$-approximation result as follows.

\begin{lemma}~\label{lem:rtimes}
By setting Monte-Carlo simulation times $r=O(\frac{n^2\log n\log \binom{n}{k}}{\epsilon^2})$, \RCELF{} achieves a $(1-1/e-\epsilon)$-approximation ratio with probability $1-1/n$.
\end{lemma}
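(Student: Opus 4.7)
The plan is to follow the standard ``approximation via concentration'' template used for PMC~\cite{ohsaka2014fast}: first show that the Monte-Carlo estimate of every relevant marginal gain is close to its true value with high probability, then feed this uniform bound into an approximate-greedy argument so that the $(1-1/e)$ guarantee of Lemma~\ref{lem:ratio} degrades only by an additive $\epsilon$.

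First, I would fix an arbitrary seed set $\mathsf{S}$ of size at most $k$ and a candidate node $u \notin \mathsf{S}$, and view the output of Algorithm~\ref{alg:MConRG} as the average of $r$ i.i.d.\ bounded random variables lying in $[0,n]$ (each simulation contributes at most the total number of reachable nodes). A Hoeffding bound then yields $\Pr[\,|\widehat{mg}(u|\mathsf{S}) - mg(u|\mathsf{S})| \geq \alpha\,] \leq 2\exp(-2r\alpha^2/n^2)$. The second step is a union bound over every $(\mathsf{S},u)$ pair the greedy loop could ever query: the number of such pairs is at most $n\binom{n}{k}$, so choosing $r = \Theta(n^2 \log n \log\binom{n}{k}/\epsilon^2)$ with an appropriately small $\alpha = \Theta(\epsilon\cdot\sigma(\mathsf{S}^*)/k)$ drives the overall failure probability below $1/n$.

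Third, conditioned on the high-probability event that \emph{every} estimated marginal gain is within $\alpha$ of the truth, I would reuse the classical analysis of greedy for monotone submodular maximization with noisy oracle queries. At iteration $i$, the node $v_i$ picked by \RCELF{} satisfies $mg(v_i|\mathsf{S}_{i-1}) \geq \max_{u} mg(u|\mathsf{S}_{i-1}) - 2\alpha$. Combined with submodularity (Theorem~\ref{the:delta}), this gives the familiar recursion
\begin{equation*}
\sigma(\mathsf{S}^*) - \sigma(\mathsf{S}_i) \leq \left(1-\tfrac{1}{k}\right)\bigl(\sigma(\mathsf{S}^*) - \sigma(\mathsf{S}_{i-1})\bigr) + 2\alpha,
\end{equation*}
which unrolls to $\sigma(\mathsf{S}_k) \geq (1-1/e)\,\sigma(\mathsf{S}^*) - 2\alpha k$. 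Plugging in $\alpha = \epsilon\,\sigma(\mathsf{S}^*)/(2k)$ yields the desired $(1-1/e-\epsilon)$-approximation.

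The main obstacle I expect is the scale issue: unlike the normalized coverage estimators used in \RIS{}-based analyses, \RCELF{}'s single-simulation outputs are bounded only by $n$, and the marginal-gain target $\alpha$ must itself scale with the (unknown) optimum $\sigma(\mathsf{S}^*)$ in order to give a multiplicative $\epsilon$ slack rather than an additive one. This is precisely what forces the $n^2/\epsilon^2$ factor in $r$. A secondary technical subtlety is that the residual capacities $\mathsf{RC}(v)$ drift between iterations, so the random variables aggregated by Algorithm~\ref{alg:MConRG} at iteration $i$ are not the same as those at iteration $i{-}1$; I would handle this by stating the Hoeffding+union bound over the (finite) collection of all residual configurations reachable along any greedy trajectory, which is still dominated by the $n\binom{n}{k}$ pair count up to constants absorbed in the $O(\cdot)$.
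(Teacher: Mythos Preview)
Your proposal is correct and follows the same Hoeffding-plus-union-bound template as the paper, which also explicitly leans on the PMC analysis of Ohsaka et~al. The one structural difference is the \emph{granularity} of the concentration step: the paper applies Hoeffding at the \emph{set level}, bounding $|\bar{\delta}(\mathsf{S})-\delta(\mathsf{S})|$ uniformly over the $\binom{n}{k}$ candidate seed sets and then invoking Lemma~\ref{lem:ratio} in a single line, whereas you apply Hoeffding at the \emph{marginal-gain level} and spell out the noisy-greedy recursion $\sigma(\mathsf{S}^*)-\sigma(\mathsf{S}_i)\le(1-1/k)(\sigma(\mathsf{S}^*)-\sigma(\mathsf{S}_{i-1}))+2\alpha$ explicitly. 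Your route is more transparent about where the $n^{2}/\epsilon^{2}$ factor originates and honestly surfaces two wrinkles---the additive-versus-multiplicative scale of the error and the drift of residual capacities between iterations---that the paper's three-line argument simply elides; conversely, the paper's set-level formulation avoids having to enumerate $(\mathsf{S},u)$ pairs or residual configurations at all, since it only ever compares empirical and true spread for fixed size-$k$ sets.
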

\begin{proof}
Let $\mathcal{S}$ be the set of every possible result set $\mathsf{S}$,  $|\mathcal{S}| = \binom{n}{k}$.
For any result set instance $\mathsf{S} \in \mathcal{S}$, $\delta_{i} (\mathsf{S})$ is $\mathsf{S}$'s influence spread value of the $i$-th Monte-Carlo simulation at \G{}, $\delta_{i}(\mathsf{S}) \in [1, n]$.
Let $\bar{\delta}(\mathsf{S}) = \frac{1}{r} \sum_{i=1}^{r}\delta_{i}(\mathsf{S})$.
By applying Hoeffding's ineqaulity (Theorem 3 in~\cite{ohsaka2014fast}),
for any $\mathsf{S} \in \mathcal{S}$, $\frac{1}{n}|\bar{\delta}(\mathsf{S}) - \delta(\mathsf{S})| \leq \epsilon $ holds with at least probability $1-2 e^{-2 r \epsilon^2} \binom{n}{k}$ .
By choosing $r = O(\frac{n^2\log n\log \binom{n}{k}}{\epsilon^2})$,  the above conclusion holds with at least probability $(1 - 1/n)$ .
By applying Lemma~\ref{lem:ratio}, we have $\bar{\delta}(\mathsf{S}) \geq (1-1/e-\epsilon)*\delta(\mathsf{S}^*)$ with $r$ Monte-Carlo simulation times.
\end{proof}

%\stitle{Complexity Analysis}
% Last,  the time and space complexity of \RCELF{} are presented in Lemma~\ref{lem:complexity}.
% \begin{lemma}~\label{lem:complexity}
% The space- and time- complexity of \RCELF{} are  $O(n+m)$ and $O(\frac{kmn^3 \log n\log \binom{n}{k}}{\epsilon^2})$, respectively.
% \end{lemma}

% \begin{proof}
% The proof of the space complexity of \RCELF{} is trivial as \RCELF{} only stores the original graph, i.e., $n$ nodes and $m$ edges, thus it is $O(n+m)$.

% The time complexity of node $u$'s marginal gain computation is $O(r*m)$.
% At worst case, \RCELF{}  incurs Algorithm~\ref{alg:MConRG} to compute the latest marginal gain of each node at each iteration.
% Then \RCELF{}'s  total time cost is $O(k*n*r*m)$.
% Combining Lemma~\ref{lem:rtimes}, the  time complexity of \RCELF{}  is $O(\frac{kmn^3 \log n \log \binom{n}{k}}{\epsilon^2})$.
% \end{proof}

% In practical, \RCELF{} performs much better than its asymptotic time complexity as we devise several optimizations in Section~\ref{sec:rcelfopt} to accelerate it.
% We will verify the effectiveness of the optimization techniques in Section~\ref{sec:exp}.

\section{Experimental Evaluation}\label{sec:exp}

In this section we evaluate \RCELF{} and present our empirical findings.
In Section~\ref{sec:setting} we describe the experimental setting.
In Section~\ref{sec:evaluation} we compare our proposal with existing approximate approaches in real datasets, and investigate the effectiveness of the optimization technique.

\begin{figure*}
  \centering
  \includegraphics[width=1.0\columnwidth]{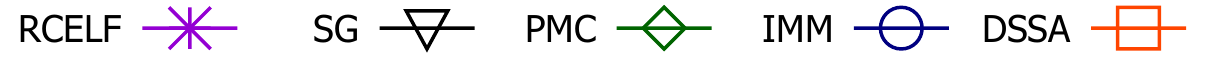}
  \begin{tabular}{cccc}
    \includegraphics[width=0.45\columnwidth]{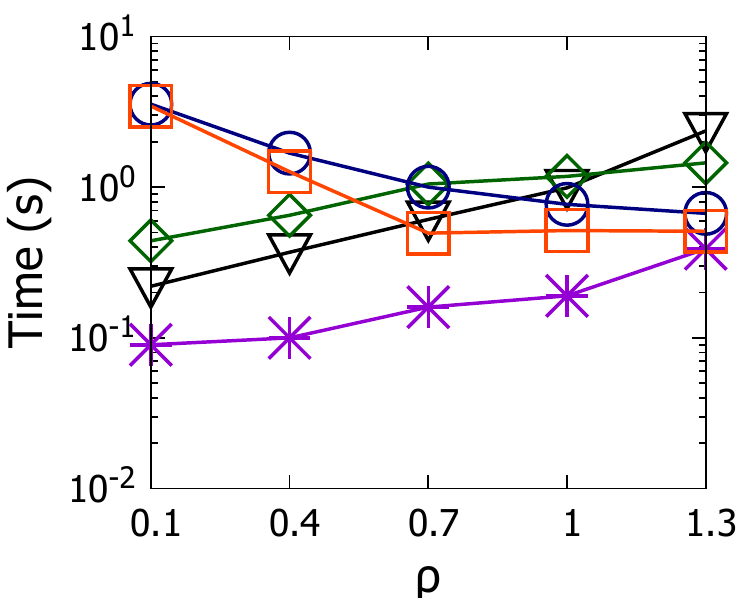}
    &
    \includegraphics[width=0.45\columnwidth]{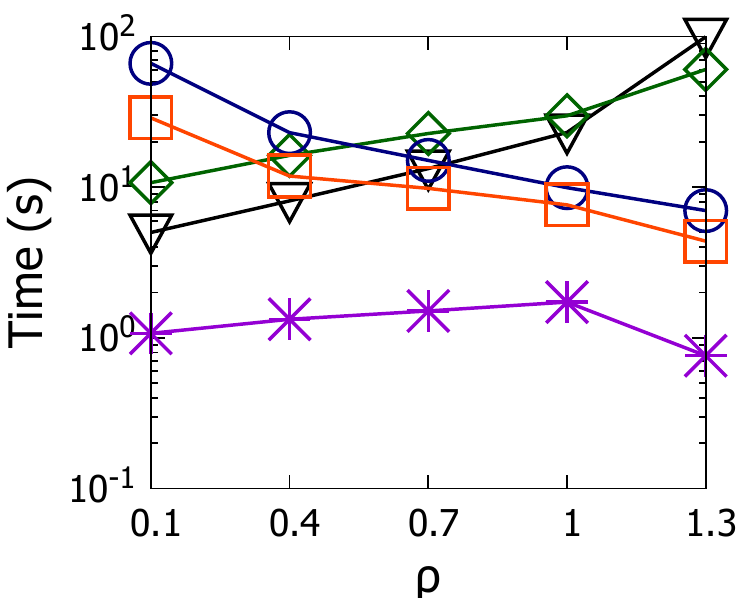}
    &
    \includegraphics[width=0.45\columnwidth]{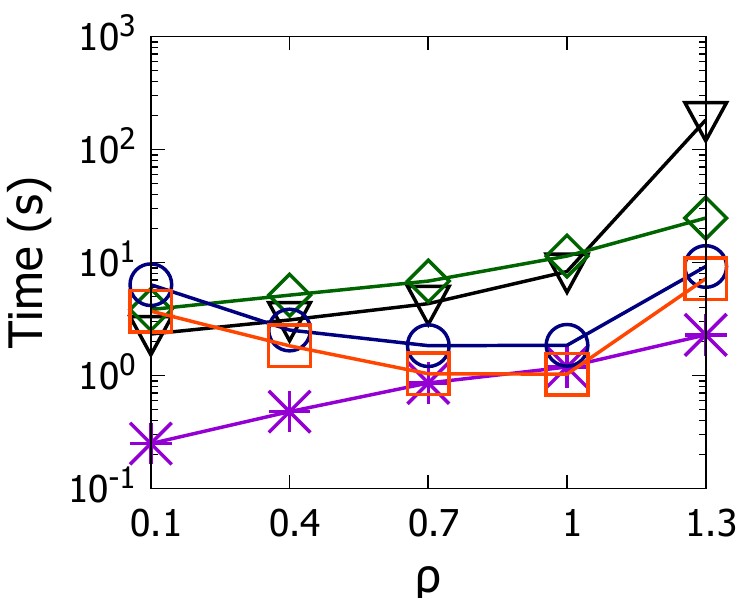}
    &
    \includegraphics[width=0.45\columnwidth]{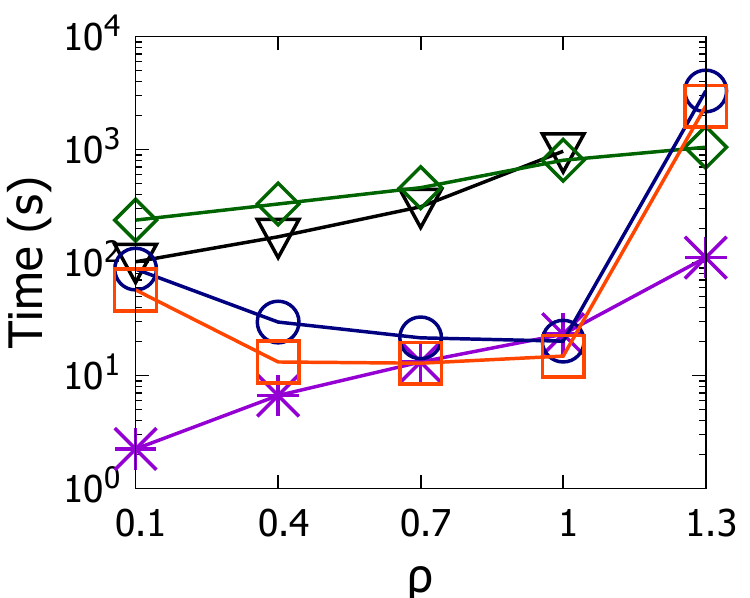}
    \\
    (a) \nethept{} & (b) \dblp{} & (c) \twitter{} & (d) \livejournal{}
  \end{tabular}
  \vspace{-2mm}
  \caption{Execution time vs. $\rho$ with $k=5$, \WC{}}\label{fig:k=5varyp_time_wc}
  
  \begin{tabular}{cccc}
    \includegraphics[width=0.45\columnwidth]{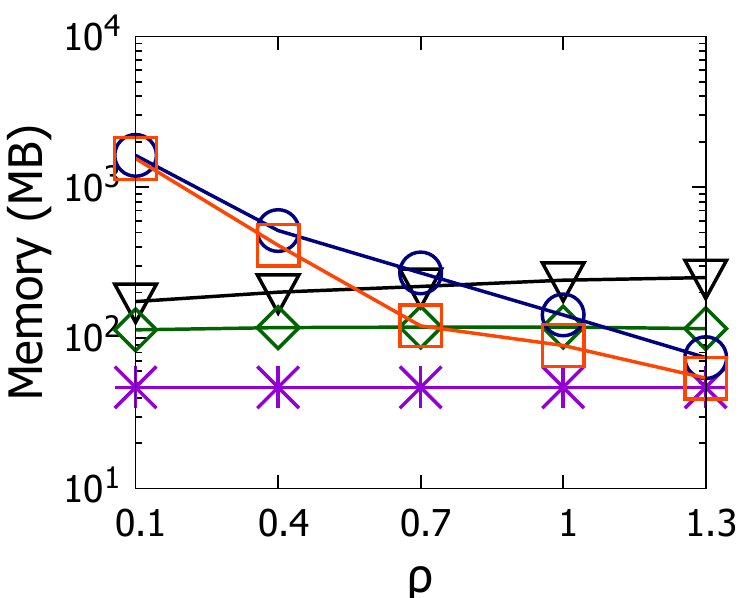}
    &
    \includegraphics[width=0.45\columnwidth]{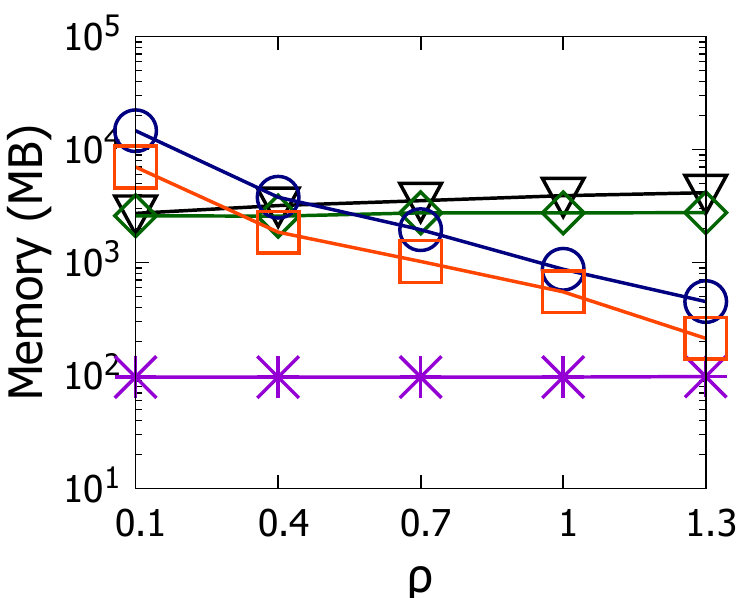}
    &
    \includegraphics[width=0.45\columnwidth]{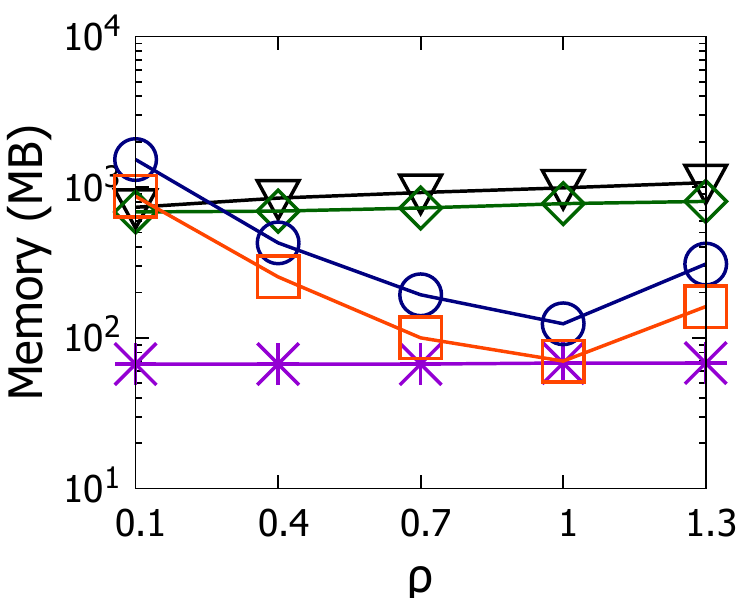}
    &
    \includegraphics[width=0.45\columnwidth]{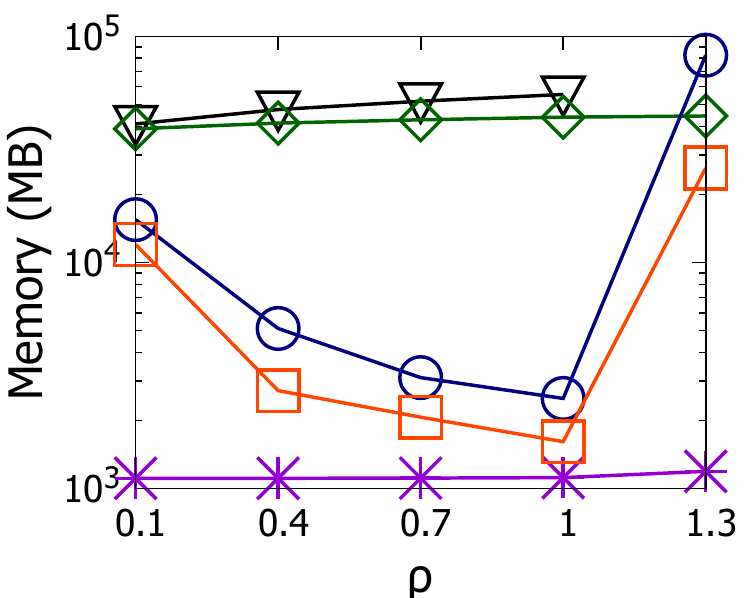}
    \\
    (a) \nethept{} & (b) \dblp{} & (c) \twitter{} & (d) \livejournal{}
  \end{tabular}
  \vspace{-2mm}
  \caption{Memory consumption vs. $\rho$ with $k=5$, \WC{}}\label{fig:k=5varyp_memory_wc}
  
  \includegraphics[width=1.0\columnwidth]{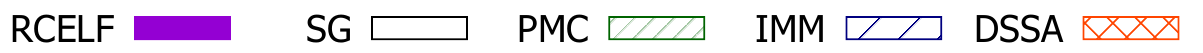}
   \begin{tabular}{cccc}
    \includegraphics[width=0.45\columnwidth]{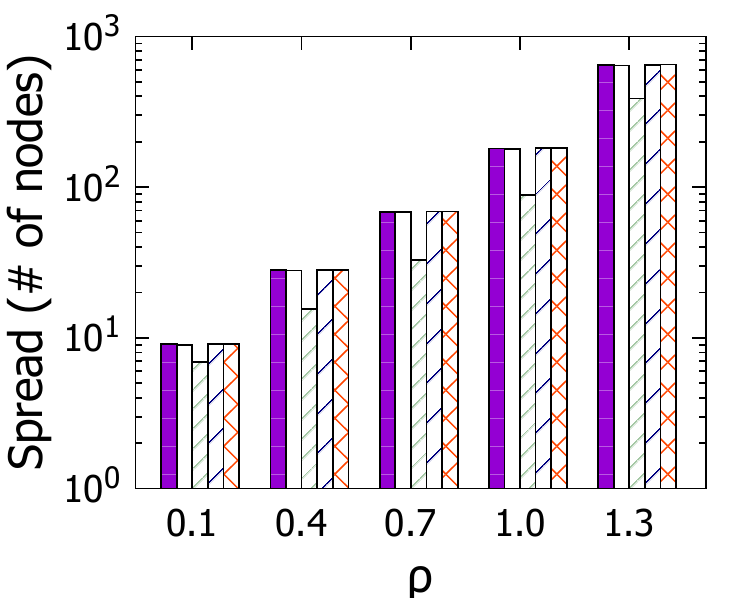}
    &
    \includegraphics[width=0.45\columnwidth]{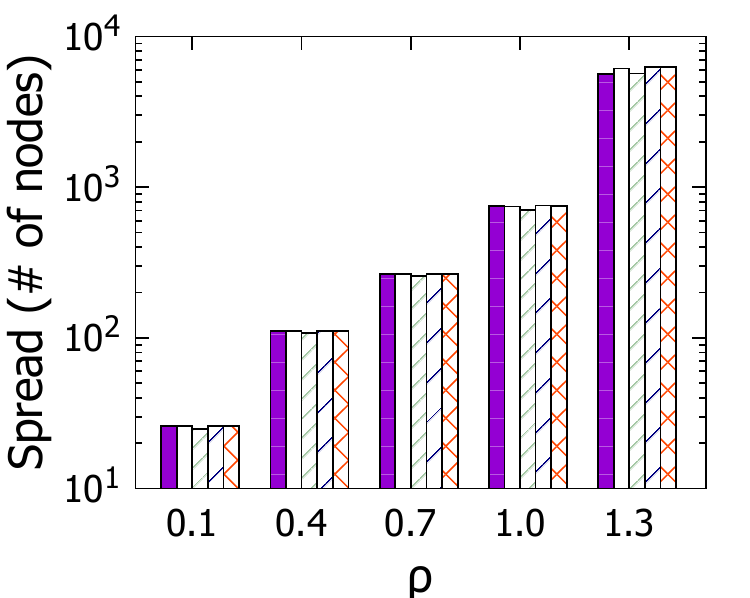}
    &
    \includegraphics[width=0.45\columnwidth]{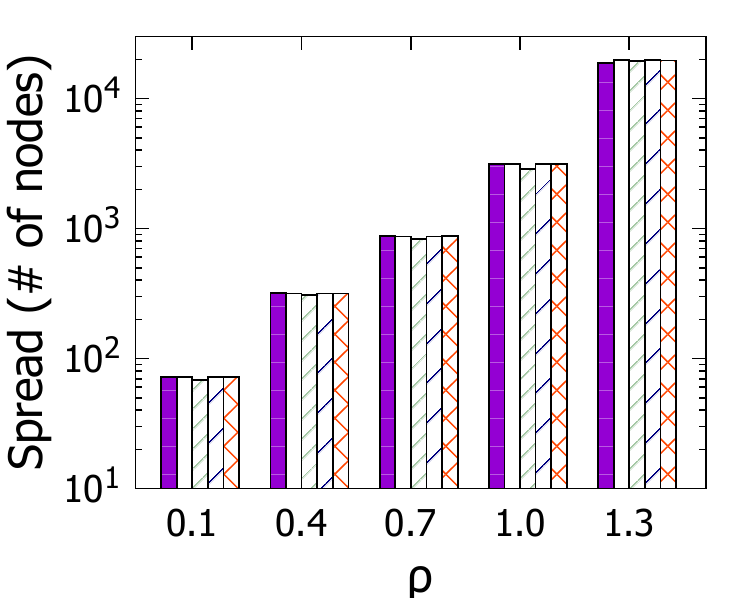}
    &
    \includegraphics[width=0.45\columnwidth]{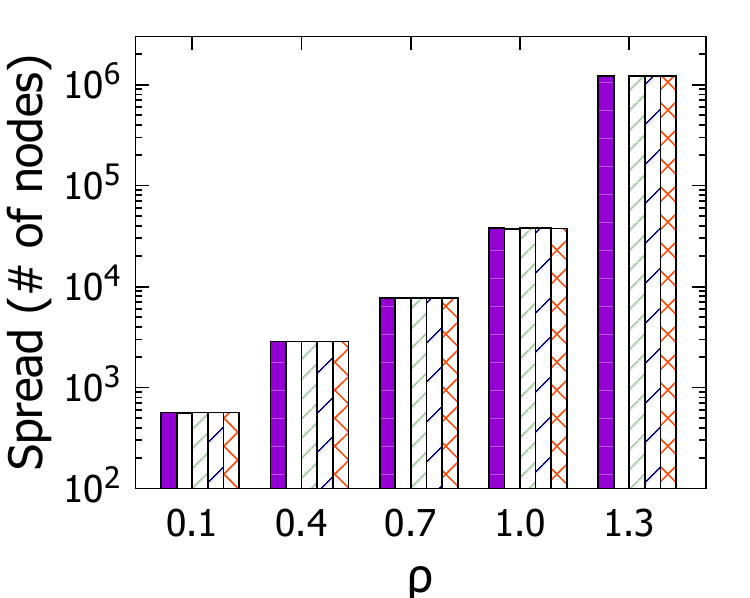}
    \\
    (a) \nethept{} & (b) \dblp{} & (c) \twitter{} & (d) \livejournal{}
  \end{tabular}
  \vspace{-2mm}
  \caption{Result quality vs. $\rho$ with $k=5$, \WC{}}\label{fig:k=5varyp_quality_wc}
\end{figure*}

\vspace{-4mm}

\begin{figure*}
  \centering
  \includegraphics[width=1.0\columnwidth]{fig/key}
  \begin{tabular}{cccc}
    \includegraphics[width=0.45\columnwidth]{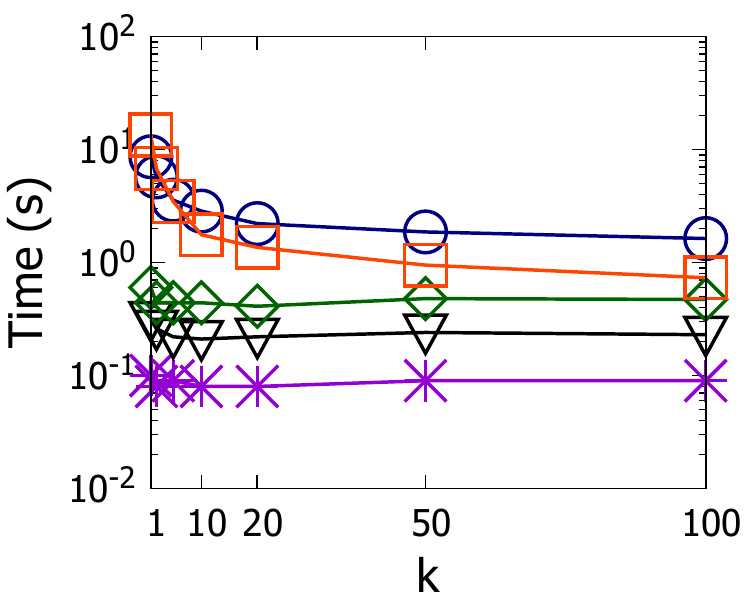}
    &
    \includegraphics[width=0.45\columnwidth]{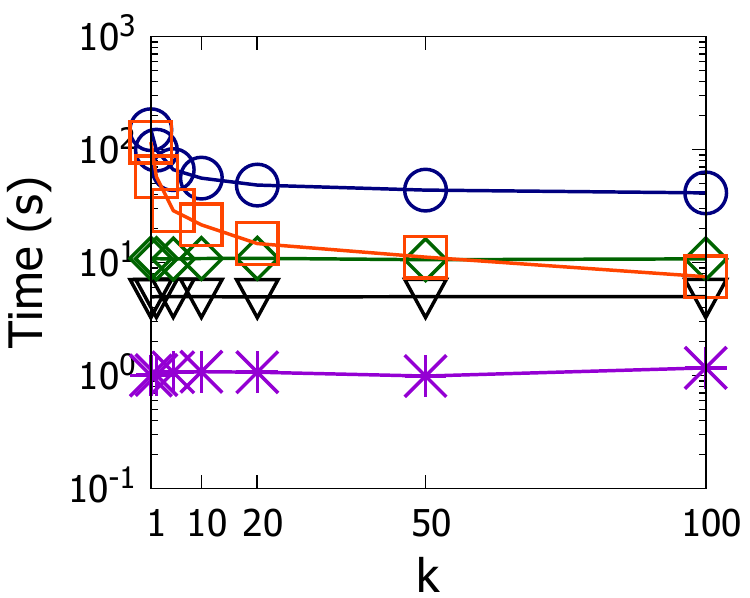}
    &
    \includegraphics[width=0.45\columnwidth]{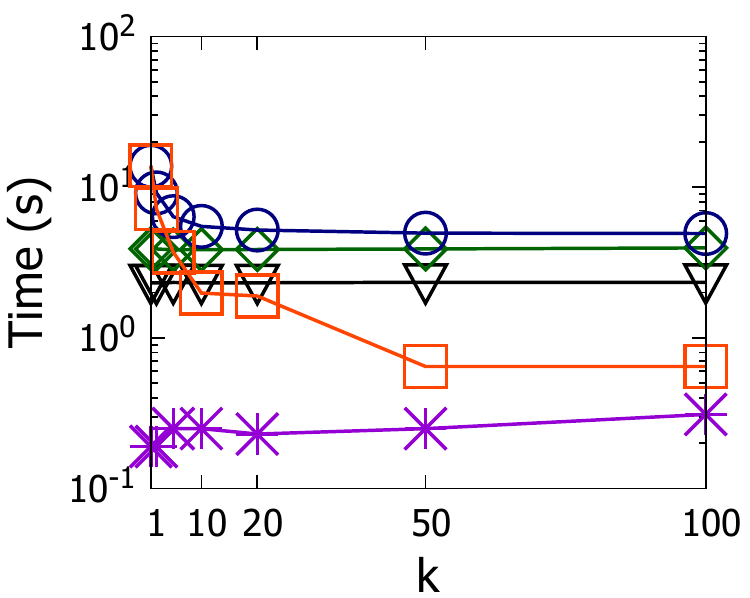}
    &
    \includegraphics[width=0.45\columnwidth]{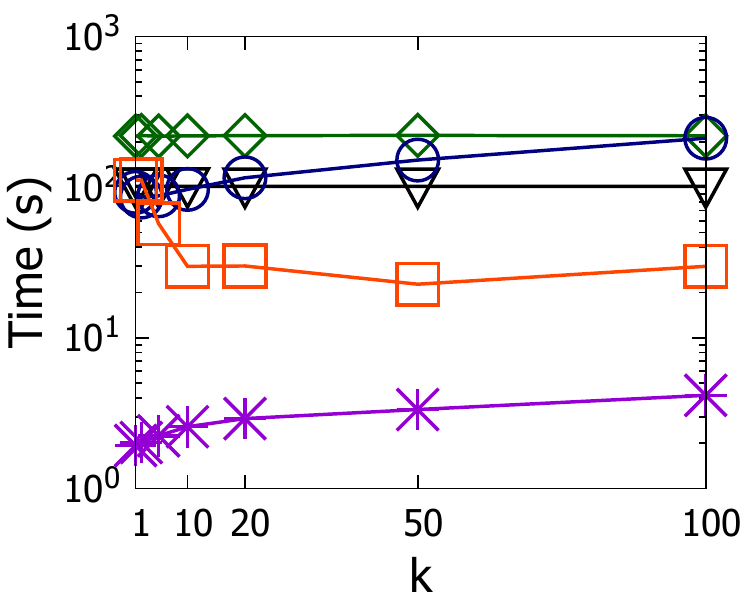}
    \\
    (a) \nethept{} & (b) \dblp{} & (c) \twitter{} & (d) \livejournal{}
  \end{tabular}
  \caption{Execution time vs. $k$ with $\rho=0.1$, \WC{}}\label{fig:p=0.1varyk_time_wc}
  
  \begin{tabular}{cccc}
    \includegraphics[width=0.45\columnwidth]{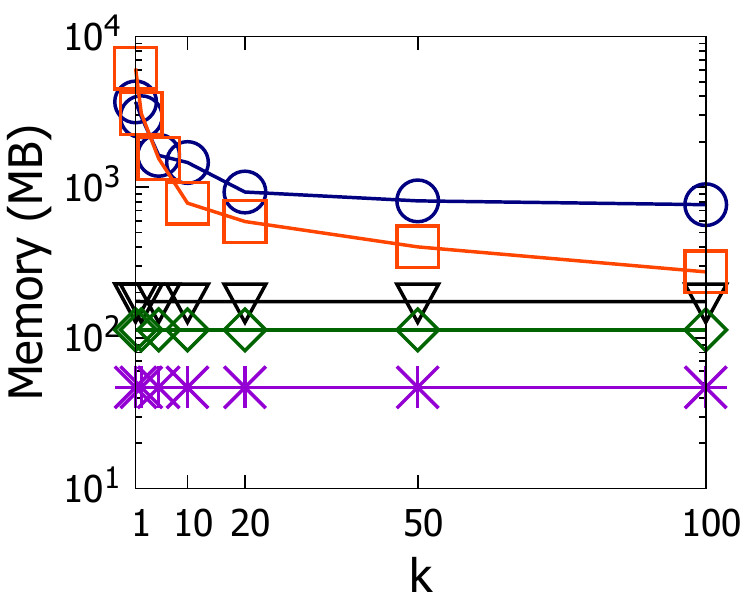}
    &
    \includegraphics[width=0.45\columnwidth]{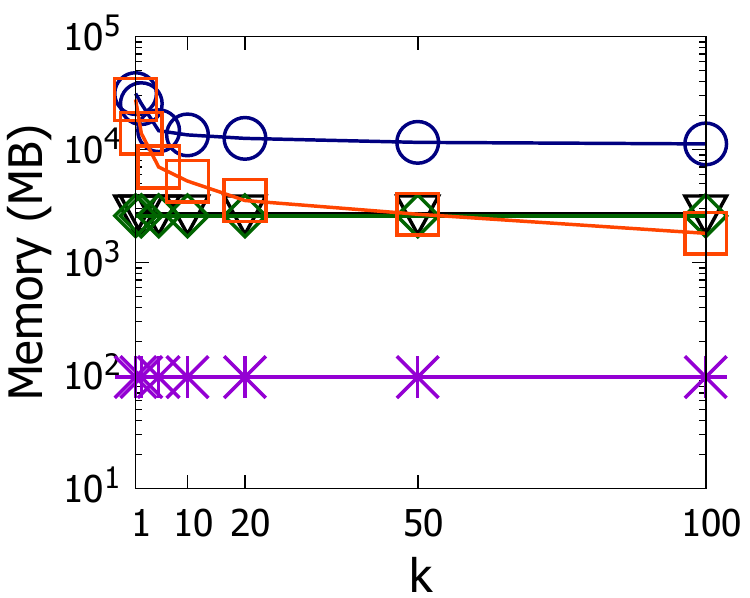}
    &
    \includegraphics[width=0.45\columnwidth]{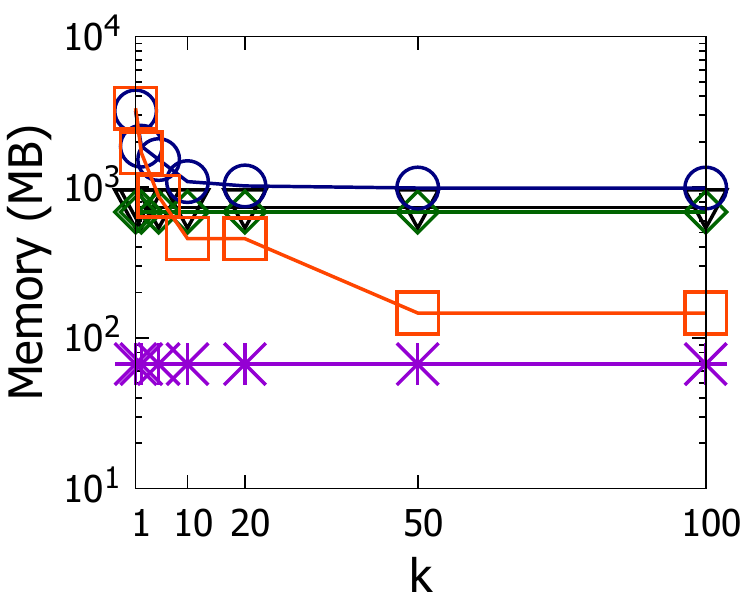}
    &
    \includegraphics[width=0.45\columnwidth]{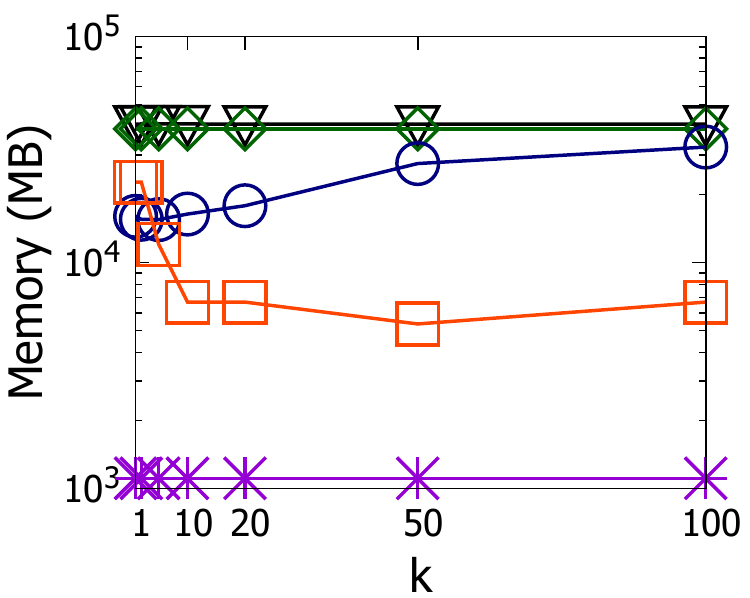}
    \\
    (a) \nethept{} & (b) \dblp{} & (c) \twitter{} & (d) \livejournal{}
  \end{tabular}
  \caption{Memory consumption vs. $k$ with $\rho=0.1$, \WC{}}\label{fig:p=0.1varyk_memory_wc}
  
  \includegraphics[width=1.0\columnwidth]{fig/key_spread}
  \begin{tabular}{cccc}
    \includegraphics[width=0.45\columnwidth]{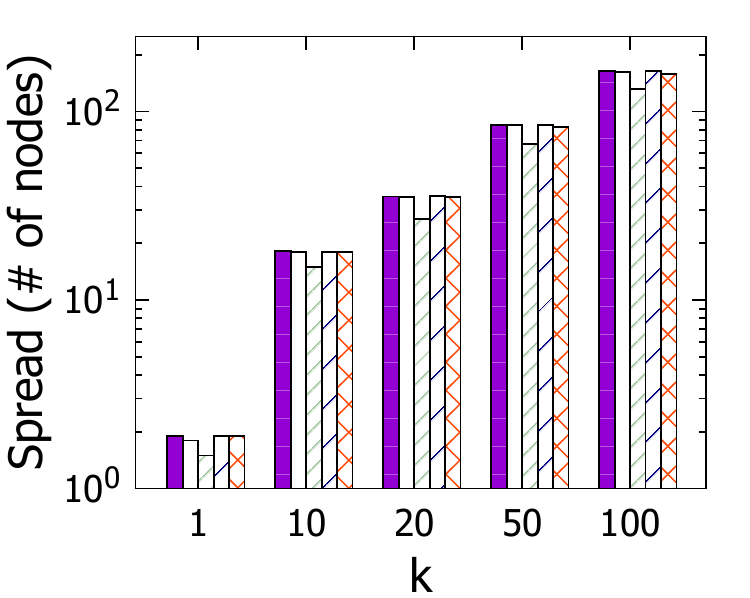}
    &
    \includegraphics[width=0.45\columnwidth]{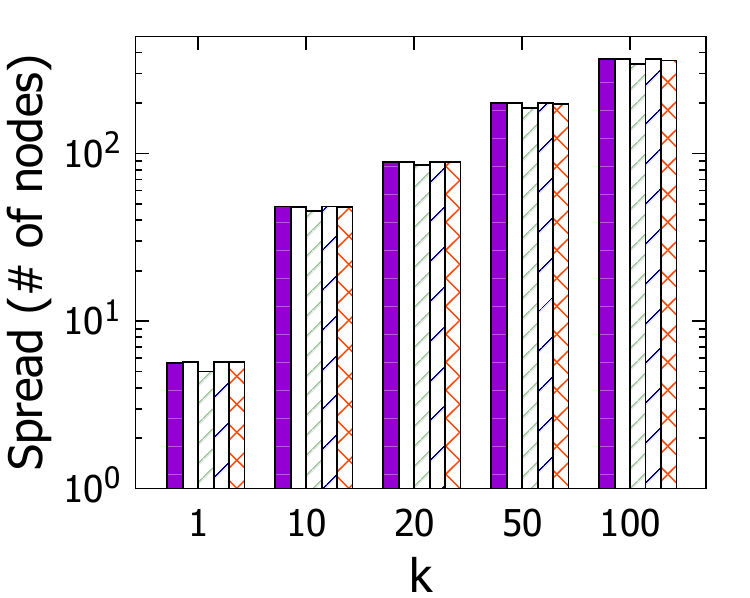}
    &
    \includegraphics[width=0.45\columnwidth]{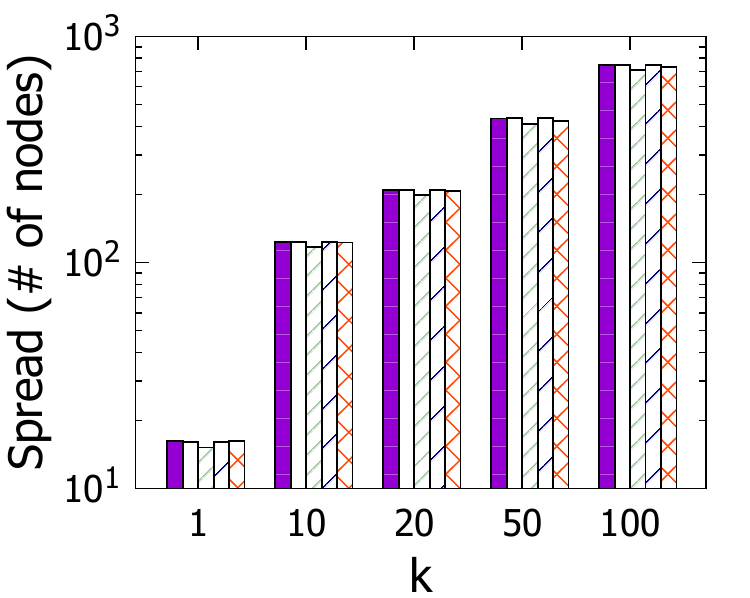}
    &
    \includegraphics[width=0.45\columnwidth]{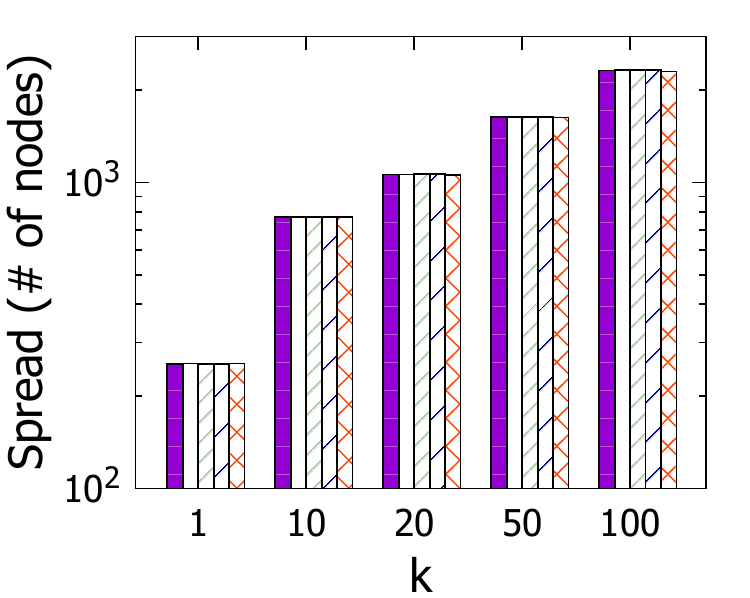}
    \\
    (a) \nethept{} & (b) \dblp{} & (c) \twitter{} & (d) \livejournal{}
  \end{tabular}
  \caption{Result quality vs. $k$ with $\rho=0.1$, \WC{}}\label{fig:p=0.1varyk_qualty_wc}
\end{figure*}

\subsection{Experimental Setting}\label{sec:setting}

\stitle{Datasets} We use five widely-used benchmarks for \IMP{}.
Table~\ref{tab:dataset} provides details on the number of nodes, edges, average degree, data type and sources of each dataset.

\begin{table}
	\centering
	\small
	\caption{Dataset characteristics}\label{tab:dataset}
	\resizebox{0.95\columnwidth}{!}{
	\begin{tabular}{|c|c|c|c|c|c|} \hline
		Datasets& $n$ & $m$ & Direct & Source \\ \hline \hline
		\nethept{} & 15K& 62K  & No & arxiv.org\\ \hline
		\twitter{} & 81K& 1.7M & Yes & snap.stanford.edu \\ \hline
		\dblp{} & 317K& 2M  &No & snap.stanford.edu \\ \hline
		\livejournal{} & 4.8M& 69M & Yes & snap.stanford.edu\\ \hline
        \twitterrv{} & 41.7M& 1.5G & Yes & an.kaist.ac.kr \\ \hline
	\end{tabular}
	}
\end{table}

\stitle{Algorithms} We compare our proposal \RCELF{} with four methods, namely, \SG{}~\cite{cheng2013staticgreedy}, \PMC{}~\cite{ohsaka2014fast}, \IMM{}~\cite{tang2015influence} and \DSSA{}~\cite{nguyen2016stop}.
We omit \CELF{} and \CELFplus{} as they are infeasible for datasets in Table~\ref{tab:dataset}.
The source code of \PMC{}, \IMM{} and \DSSA{} are from the homepage of the authors.
We implement \SG{} and \RCELF{} by C++.
All experiments run on Centos 7.4 with Intel Xeon E5-2620(2.1GHz) and 80GB memory.

\stitle{Diffusion Models}
We consider the widely used Weighted Cascade (\WC{}), Independent Cascade (\IC{}) and Linear Threshold (\LT{}) models.
In conventional \WC{} and \LT{} models, the weight of each edge $w(u,v)$ is set as $1/|$\inu{v}$|$.
We use $\rho/|\mathsf{In}{v}|$ in this work, where $\rho$ is a tunable parameter.
For \IC{} model, we follow the settings in~\cite{cheng2013staticgreedy}, e.g., the weight of each edge is $0.001$.

\stitle{Parameter Setting}
In order to guarantee the same result quality, we set $\epsilon =0.1$ in \IMM{}~\cite{huang2017revisiting}, \DSSA{}~\cite{nguyen2016stop} and our proposal \RCELF{}.
For \DSSA{} $\delta=1/|V|$,  $r=200$ for \SG{} and \PMC{}.
The performance metrics of \IMP{} are execution time, memory consumption.
Each plotted value corresponds to the average of measurements observed over 20 times.

\subsection{Performance Evaluation}\label{sec:evaluation}

\begin{figure*}
  \centering
  \includegraphics[width=1.0\columnwidth]{fig/key}
  \begin{tabular}{cccc}
    \includegraphics[width=0.45\columnwidth]{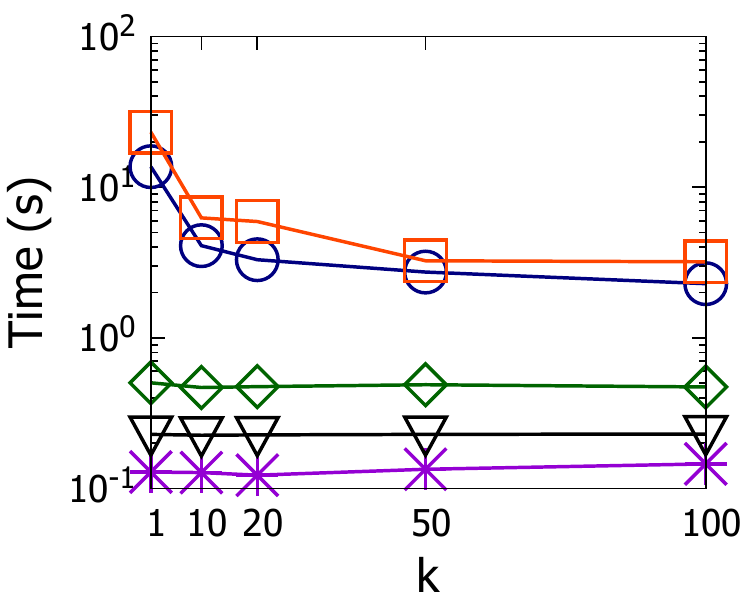}
    &
    \includegraphics[width=0.45\columnwidth]{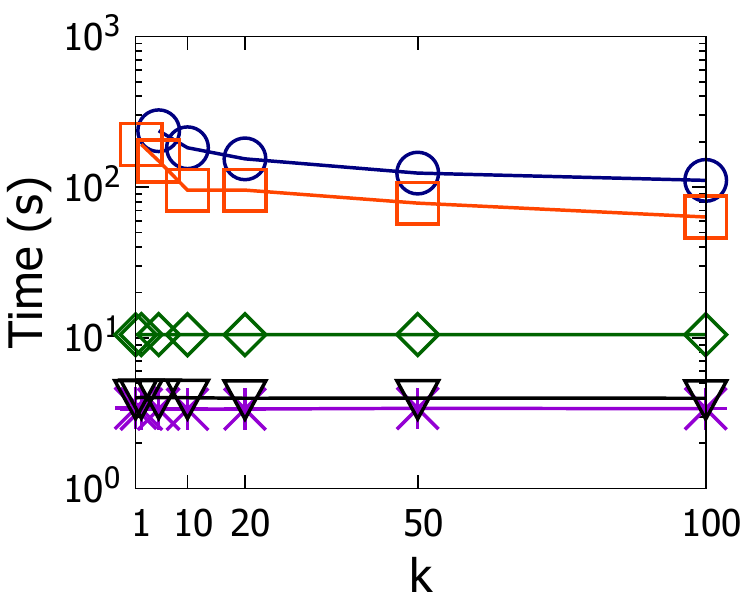}
    &
    \includegraphics[width=0.45\columnwidth]{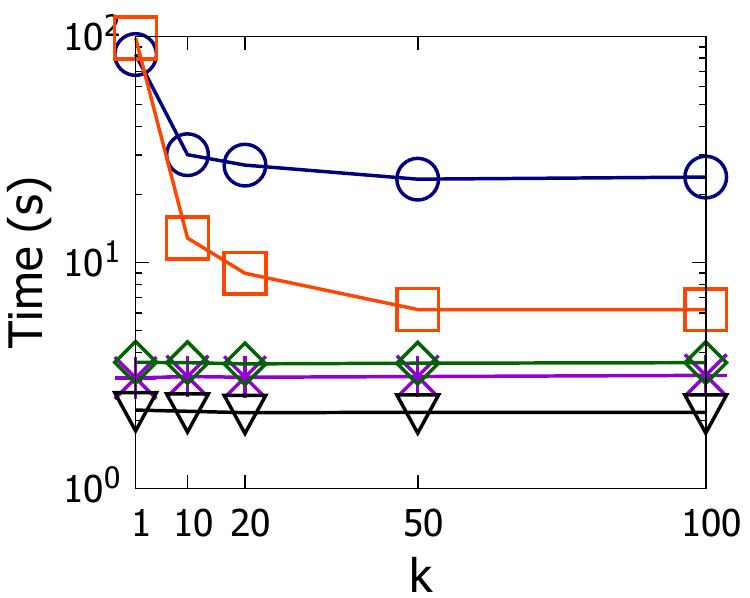}
    &
    \includegraphics[width=0.45\columnwidth]{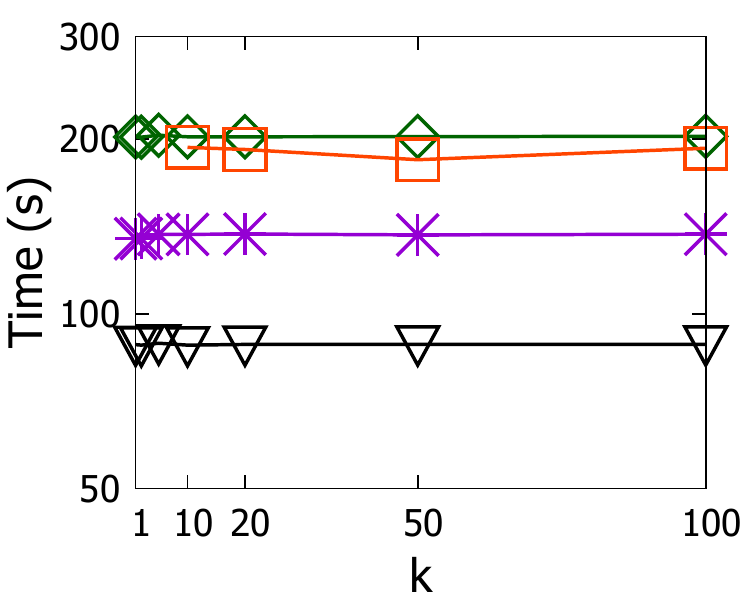}
    \\
    (a) \nethept{} & (b) \dblp{} & (c) \twitter{} & (d) \livejournal{}
  \end{tabular}
  \caption{Execution time vs. $k$ with {$w = 0.001$}, \IC{}}\label{fig:p=0.001varyk_time_ic}
  \begin{tabular}{cccc}
    \includegraphics[width=0.45\columnwidth]{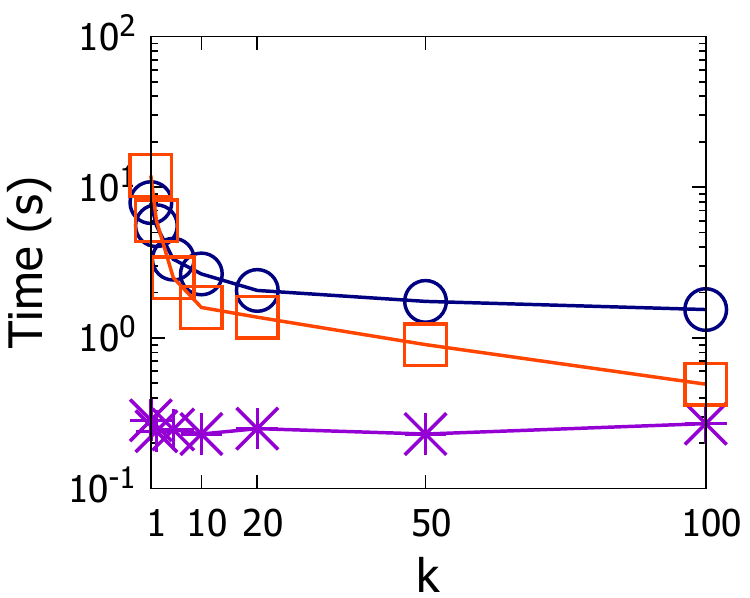}
    &
    \includegraphics[width=0.45\columnwidth]{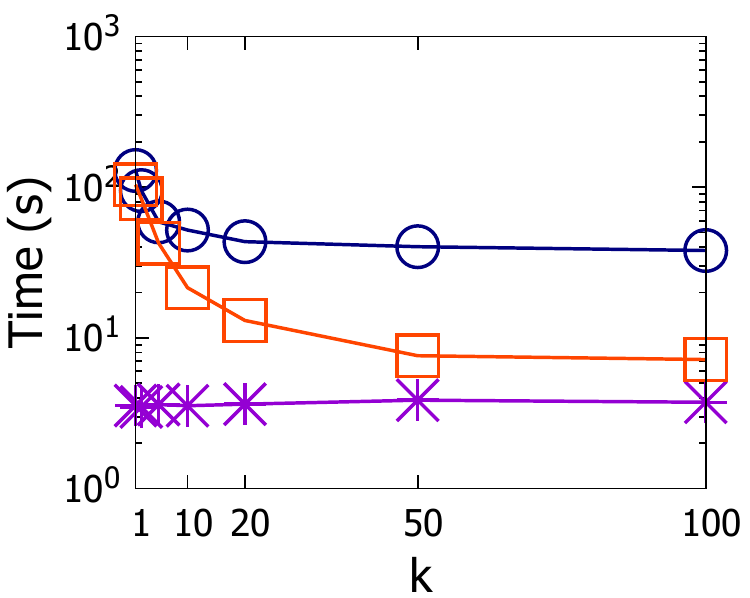}
    &
    \includegraphics[width=0.45\columnwidth]{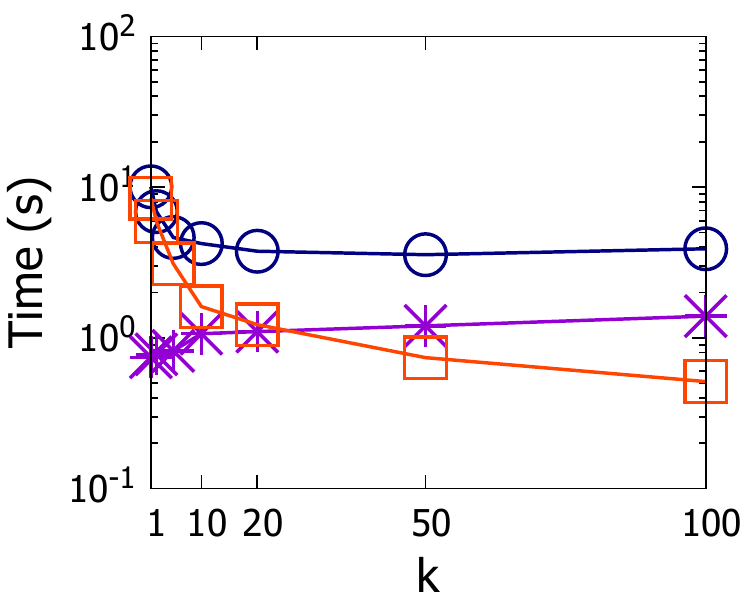}
    &
    \includegraphics[width=0.45\columnwidth]{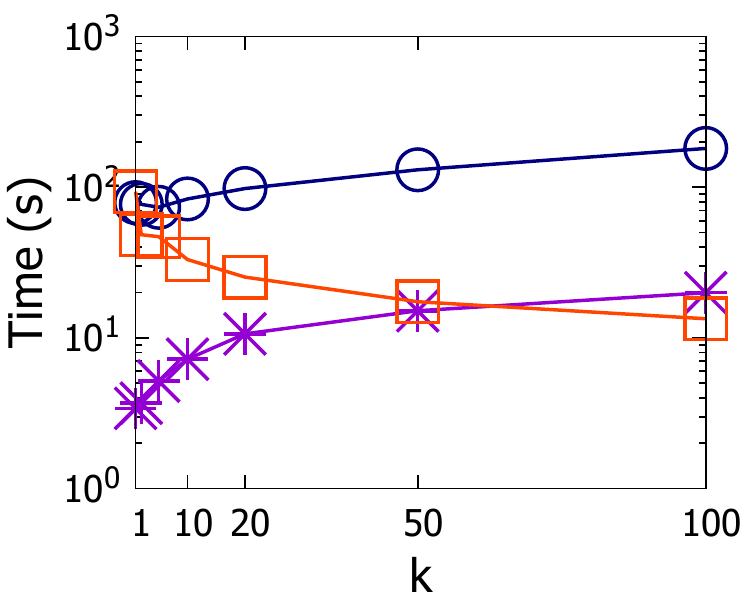}
    \\
    (a) \nethept{} & (b) \dblp{} & (c) \twitter{} & (d) \livejournal{}
  \end{tabular}
  \caption{Execution time vs. $k$ with $\rho=0.1$, \LT{}}\label{fig:p=0.1varyk_time_lt}
\end{figure*}

%Our first set of experiments focuses on weight cascade (\WC{}) model.
\stitle{Effect of $\rho$ in \WC{}}
Figure~\ref{fig:k=5varyp_time_wc}, ~\ref{fig:k=5varyp_memory_wc} and ~\ref{fig:k=5varyp_quality_wc} show the execution time, memory consumption and result quality of each approach by varying $\rho$ from $0.1$ to $1.3$, respectively.
The execution time of \RCELF{} outperforms \SG{}, \PMC{}, \DSSA{} and \IMM{} for all $\rho$ values in four datasets, as illustrated in Figure~\ref{fig:k=5varyp_time_wc}\footnote{The missing values are due to the algorithms run out of memory or cannot return the result within 10 hours.}.
However, \IMM{} and \DSSA{} perform as better as \RCELF{} when $\rho=1$ in \twitter{} and \livejournal{} (see Figure~\ref{fig:k=5varyp_time_wc}(c) and (d)).
It confirms our analysis in Section~\ref{sec:eximp}, i.e., \IMM{} and \DSSA{} are pretty good when $\rho=1$. 

We measure the memory consumption of each approach by varying $\rho$ in four datasets in Figure~\ref{fig:k=5varyp_memory_wc}.
Since the space complexity of \RCELF{} is $O(n+m)$, the memory consumption of \RCELF{} performs better than all other competitors by varying $\rho$ from 0.1 to 1.3 in all datasets.
Interestingly, the memory consumptions of \IMM{} and \DSSA{} are rising when $\rho$ varies from 1 to 0.1, and from 1 to 1.3 in two large datasets (see Figure~\ref{fig:k=5varyp_memory_wc}(c) and (d)).
The reason is that the memory consumptions of \IMM{} and \DSSA{} are heavily relying on the prorogation probability of each edge.
Specifically, \IMM{} and \DSSA{} requires more bootstrap iterations when $\rho$ is small. Thus the memory consumptions are rising from $1$ to $0.1$.
When $\rho$ is rising from $1.0$ to $1.3$, the number of nodes in each generated \RR{} set by \IMM{} and \DSSA{} will increase dramatically as the strongly connected properties of the underlying social network.
Thus the size of \RR{} set grows up.

{The result qualities of all approaches by varying $\rho$ in four datasets are illustrated in Figure~\ref{fig:k=5varyp_quality_wc}.
The practical spread results of different approaches are similar as all approaches guarantees the same approximation ratio of result\footnote{We use the author released code of \PMC{} and follow the parameter setting in~\cite{ohsaka2014fast}. However, \PMC{} performs worse than other competitors in all cases.}.}
%\textcolor{red}{We measure the result quality of each approach by varying $\rho$ in four datasets in Figure~\ref{fig:k=5varyp_quality_wc}. As we have prove in Section~\ref{sec:ana}, \RCELF{} has the same theoretical guarantee as the state-of-the-art algorithms. The result confirms that all approaches have near practical result quality. Except for \PMC{} performs worse than others in all datasets. As we just adopt the code released by \PMC{}'s author and follow the parameters setting in \cite{ohsaka2014fast}, we cannot explain the reason.}

\stitle{Effect of $k$ in \WC{}}
We test the effect of $k$ by fixing $\rho=0.1$ in all four datasets.
Figure~\ref{fig:p=0.1varyk_time_wc} shows the execution cost of each method, varying $k$ from 1 to 100.
Observe that our proposal \RCELF{} consistently outperforms all competitors.
Specifically, our proposal \RCELF{} is up to 86.1\X{}, 149.7\X{}, 72.1\X{} and 47.8\X{} faster than \IMM{} in \nethept{}, \dblp{}, \twitter{}, and \livejournal{}, respectively.
It is faster than \DSSA{} by 134\X{}, 116\X{}, 86.8\X{} and 57.5\X{}, respectively.
Besides that, {the performance of \PMC{} is worse than that of \SG{}. Although \PMC{} is proposed to improve \SG{}, it costs more time since it is sensitive to the connectivity of the graph. Unfortunately, the graph connectivity is poor with \WC{} model, the overhead of generating DAGs in \PMC{} is larger than its benefits.}

The memory consumptions of each approach in four datasets are illustrated in Figure~\ref{fig:p=0.1varyk_memory_wc}.
\RCELF{}'s memory requirement is stable with regard to $k$ in all datasets as the space complexity of our approach is $O(m+n)$.
It requires the minimum memory space among all competitors.
For example, when $k=5$ \DSSA{}, \IMM{} and \RCELF{} require 1,548MB, 1,625MB and 47MB memory space on \nethept{}, respectively.
In particular, \DSSA{} consumes more than 27GB memory space when $k=1$ to return a solution on a 18MB dataset \dblp{}.

{Figure~\ref{fig:p=0.1varyk_qualty_wc} shows the spread values of different approaches in four datasets by varying $k$ from $1$ to $100$.
As expected, our proposal \RCELF{} performs as well as other competitors (e.g., \IMM{}, \DSSA{}) in all tested settings.
This also verified that \RCELF{} guarantees the approximation ratio of the result as other approximate approaches.}

\stitle{\RCELF{} in \IC{} and \LT{}}
Figure~\ref{fig:p=0.001varyk_time_ic} shows the execution time of \RCELF{}, \SG{}, \PMC{}, \IMM{}, and \DSSA{} by varying $k$ in \IC{} model, where $w(u,v)$ follows the setting of~\cite{cheng2013staticgreedy}, i.e.,  $w(u,v)=0.001$.
%{$0.01$~\cite{ohsaka2014fast}.}
\RCELF{} is one to two orders of magnitude faster than \IMM{} and \DSSA{} in \nethept{}, \dblp{} among all four datasets, as shown in Figure~\ref{fig:p=0.001varyk_time_ic}.
In \livejournal{} dataset, 
\IMM{} cannot return results as it incurs extremely large memory consumption for all $k$ values. 
When $k=1,2$ and $5$, \DSSA{} also infeasible due to huge memory consumption,
it confirms that \DSSA{} performs worse when $k$ is small~\cite{huang2017revisiting} (see Figure~\ref{fig:p=0.001varyk_time_ic}(d)).
%However, \DSSA{} works well when $k$ is large

\SG{} and \PMC{} do not work with \LT{}~\cite{li2018influence}.
Figure~\ref{fig:p=0.1varyk_time_lt} shows the execution time of \RCELF{}, \IMM{}, and \DSSA{} by varying $k$ in \LT{} model.
\RCELF{} is better than or at least comparable with \IMM{} and \DSSA{} in all four datasets.
% For \PMC{}, the graph connectivity is poor with \IC{} model, the overhead of generating DAGs in \PMC{} is larger than the benefits.
% Surprisingly, \SG{} outperforms other algorithms in \livejournal{} since the overhead of snapshot generation is reduced when $p=0.001$.
%as \RCELF{} always perform better than competitors.

For the sake of presentation,  we omit the memory consumption and result quality results as they are similar with Figure~\ref{fig:p=0.1varyk_memory_wc} and ~\ref{fig:p=0.1varyk_qualty_wc}, respectively.

\stitle{Scalability of \RCELF{}}
We verify the scalability of \RCELF{} in the largest dataset (i.e., \twitterrv{}) used in literature in \WC{} model.
Figure~\ref{fig:k=5varyp_time_large}(a) shows the execution cost of \RCELF{}, \IMM{}, and \DSSA{} by varying $\rho$ with $k=5$.
\RCELF{} outperforms other competitors in all cases.
We measure the execution time of \RCELF{}, \IMM{} and \DSSA{} by varying $k$ with $\rho=0.1$ in Figure~\ref{fig:k=5varyp_time_large}(b).
When $k$ is small, \RCELF{} performs better than \IMM{} and \DSSA{}.
When $k$ is large, \DSSA{} outperforms \RCELF{} due to its superiority for large $k$.
Besides, the memory consumption of \RCELF{} is less than \IMM{} and \DSSA{} as \RCELF{} does not incur any extra memory overhead.

\begin{figure}
	\centering
	\begin{tabular}{cc}
		\includegraphics[width=0.45\columnwidth]{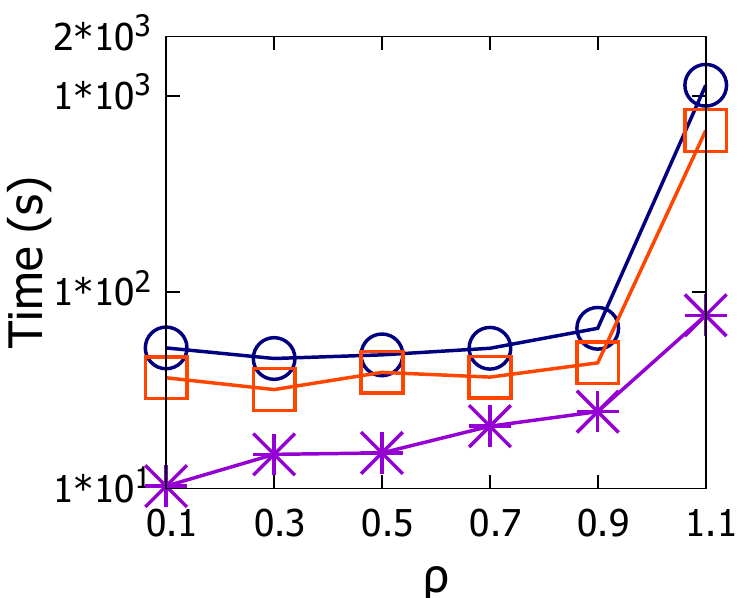}
		&
		\includegraphics[width=0.45\columnwidth]{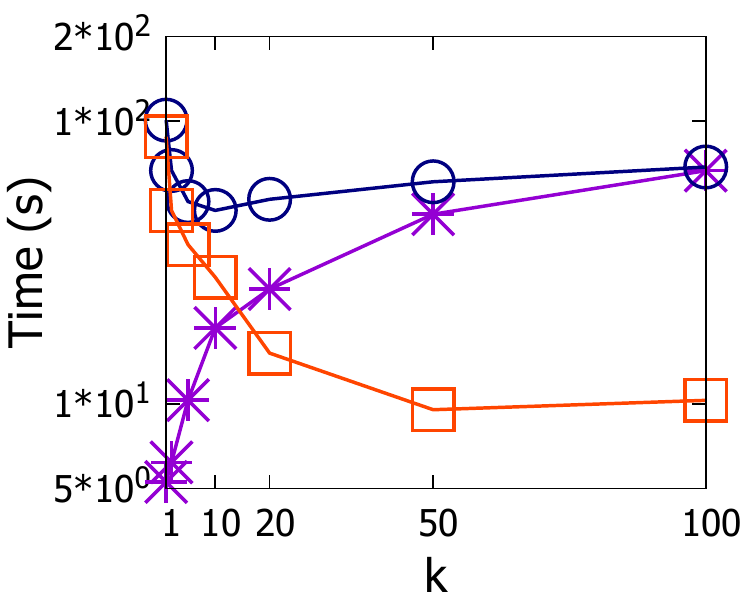}
		\\
		(a) Varying $\rho$, $k=5$ & (b) Varying $k$, $\rho=0.1$
	\end{tabular}
	\caption{Execution time, \twitterrv{}, \WC{}}\label{fig:k=5varyp_time_large}
\end{figure}

\stitle{Optimization Evaluation}
Here we evaluate the effectiveness of our proposed optimization techniques (i.e., Lemma \ref{lem:sigma}).
%In Figure~\ref{fig:lemma}(a), we set $k=1$ and $\rho=1$ for \RCELF{} and test it with enable and disable the optimization of Lemma~\ref{lem:initub}.
%\RCELF{} with Lemma~\ref{lem:initub} speed 4.1, 19.4, 23.2, 88.2 times up over \RCELF{} without Lemma~\ref{lem:initub} in \nethept{}, \dblp{}, \twitter{} and \livejournal{}, respectively.
We evaluate the effect of Lemma~\ref{lem:sigma} with $k=100$ and $\rho=0.1$.
As illustrated in Figure~\ref{fig:lemma}, our Lemma~\ref{lem:sigma} optimization offer saving of 1.0\%, 26.7\%, 21.7\%, 40.9\% (compared to \RCELF{} without Lemma~\ref{lem:sigma}) in \nethept{}, \dblp{}, \twitter{} and \livejournal{}, respectively.

\begin{figure}
	\centering
%	\begin{tabular}{cc}
%		\includegraphics[width=0.45\columnwidth]{}
%		&
		\includegraphics[width=0.50\columnwidth]{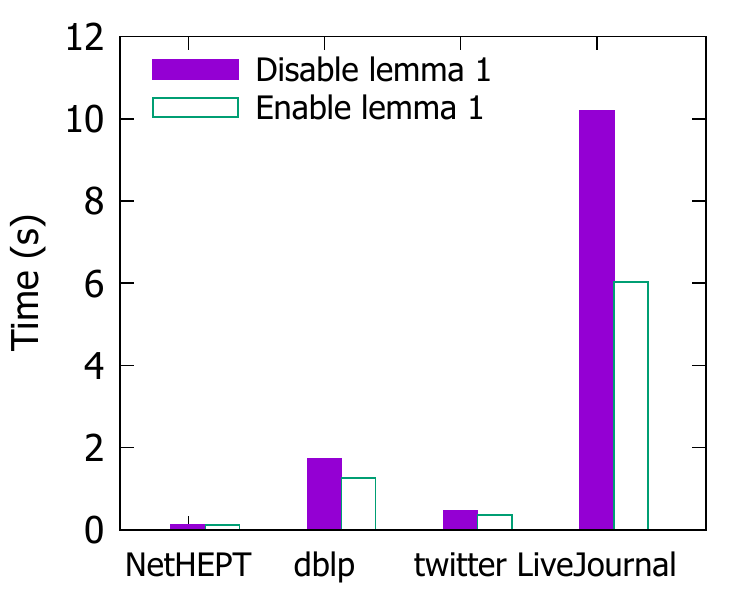} \\
%		(a) Effect of Lemma~\ref{lem:initub} & (b) Effect of Lemma~\ref{lem:sigma}
%	\end{tabular}
	\caption{Effect of Lemma~\ref{lem:sigma}}\label{fig:lemma}
  % \vspace{2mm}
\end{figure}

\trim
\section{Conclusion}~\label{sec:con}
In this paper, we discuss existing approximation solutions for \IMP{},
which are compromising time efficiency or memory consumption for the approximate result quality.
In order to address that, we propose a residual-based algorithm \RCELF{} for \IMP{}, which achieves good time efficiency, low memory consumption and approximate guaranteed result quality concurrently in generalized \IC{} and \LT{} models.
Besides, we propose several optimizations to accelerate the performance of \RCELF{}.
We demonstrate the superiority of \RCELF{} on standard real benchmarks.
We plan to extend our \RCELF{} to Triggering model and Time Aware model in future work.

\begin{small}
\bibliographystyle{IEEEtran}
\bibliography{ref}
\end{small}

\end{document}